\documentclass[11pt]{article}

\usepackage[a4paper,margin=1in]{geometry}
\usepackage[T1]{fontenc}
\usepackage[utf8]{inputenc}
\usepackage{lmodern}
\usepackage{amsmath,amssymb,amsfonts,amsthm,mathtools}
\usepackage{bm}
\usepackage{graphicx}
\usepackage{booktabs}
\usepackage{multirow}
\usepackage{array}
\usepackage{float}
\usepackage{enumitem}

\usepackage[ruled,vlined,linesnumbered]{algorithm2e}

\usepackage[hidelinks]{hyperref}
\usepackage[nameinlink,capitalize]{cleveref}

\SetKwComment{Comment}{/* }{ */}

\newtheorem{theorem}{Theorem}[section]
\newtheorem{proposition}[theorem]{Proposition}
\newtheorem{lemma}[theorem]{Lemma}
\newtheorem{remark}[theorem]{Remark}

\newtheorem{corollary}[theorem]{Corollary}
\numberwithin{equation}{section}

\newcommand{\diag}{\operatorname{diag}}

\newcommand{\argmax}{\operatorname*{arg\,max}}

\newcommand{\Green}{\mathcal{G}}

\title{\textbf{Forward--Backward Green Cosine Geometry 
for Directed Community Detection and Overlap Expansion}}

\author{
Duy Hieu DO\\
Institute of Mathematics, Vietnam Academy of Science and Technology
}
\date{}

\begin{document}

\maketitle

\begin{center}
\texttt{ddhieu@math.ac.vn}
\end{center}

\begin{abstract}
Community detection in directed graphs is challenging because edge asymmetry
induces non-reversible diffusion, direction-dependent accessibility, and
different source and target roles. The difficulty is even greater in
overlapping community detection, where a vertex may belong to more than one
group.
This paper develops a Green-based cosine geometry for directed community
detection and for expanding a disjoint partition into an overlapping cover.
The starting point is the observation that hitting-time information is natural
for directed graphs, but raw hitting-time vectors are not well suited for
cosine comparison. Indeed, they contain a
source-independent stationary baseline, while cosine similarity is not
translation-invariant. We therefore replace raw hitting-time profiles by centered Green profiles of
the directed random walk. For constructing vertex coordinates, we use the
diffusive part of the truncated Green profile, excluding the time-zero
self-spike. To account for edge asymmetry, we combine the diffusive Green
profile of the original walk with the corresponding profile on the
edge-reversed graph, obtaining a forward--backward Green coordinate for each
vertex.

The proposed framework leads to two concrete algorithms based on the same
forward--backward Green cosine geometry. The first is
Di-Green-FB-cosine-KMeans, a disjoint directed community detection algorithm
that clusters vertices in the proposed Green cosine space. The second is
Di-Green-FB-Cosine Overlap, an overlap expansion algorithm that converts an
initial disjoint partition into an overlapping cover by a community-adaptive
cosine rule. The framework is modular: the initial partition can in principle
be supplied by any disjoint community detection algorithm, but in the main
pipeline of this paper it is produced by our own
Di-Green-FB-cosine-KMeans method.  Experiments on synthetic directed benchmarks show that the proposed Green
forward--backward geometry substantially improves over raw hitting-time cosine
variants and gives competitive performance against directed spectral and
flow-based baselines for disjoint detection. Additional real-network
experiments, evaluated only by directed modularity as an internal
structural-quality measure, indicate that the same geometry can produce
coherent directed partitions on diverse empirical networks. Finally, synthetic overlap experiments show that the proposed geometry can
recover additional memberships effectively, especially in moderate and weakly
separated directed networks.
\end{abstract}

\noindent\textbf{Keywords:} directed graphs, overlapping community detection,
hitting times, Green operator, Green function, forward--backward diffusion,
random walks, cosine similarity, stationary distribution.

\section{Introduction}

Community detection is a central topic in network science. Its purpose is to
partition the vertices of a graph into groups that reflect some meaningful
internal organization, such as dense internal connectivity, coherent
transport behavior, or similar functional roles. In undirected graphs, the
subject has been extensively studied, and a broad spectrum of methods has been
developed, including modularity optimization, spectral clustering,
probabilistic block models, and random-walk-based approaches
\cite{Fortunato2010}. In directed graphs, however, the problem is more
subtle. Because the edge set is asymmetric, the graph may exhibit
non-reversible transport, direction-sensitive accessibility, and distinct
sending and receiving roles. Consequently, methods that are geometrically
natural in the undirected setting may become incomplete or even misleading in
the directed case \cite{MalliarosVazirgiannis2013,Chung2005}.

The situation becomes even more challenging when communities are allowed to
overlap. In many real systems, a vertex may simultaneously belong to several
communities. This phenomenon is common in social, biological, information,
and technological networks, and has motivated a substantial literature on
overlapping community detection \cite{Peel2017,Jebabli2018,Harenberg2014}. A
standard approach is to first compute a disjoint partition and then enlarge it
by assigning some vertices to additional communities
\cite{Ponomarenko2021,Chen2010}. The success of such a two-step framework
depends heavily on the criterion used to compare a vertex with a community.
If that criterion is too local, then it may fail to capture the genuinely
global transport structure of a directed graph.

A natural response is to adopt a random-walk viewpoint. Random walks have long
played an important role in community detection because they reflect how flow
or information propagates through the network: a walk tends to remain for a
relatively long time inside a coherent region before escaping it
\cite{PonsLatapy2006,RosvallBergstrom2008}. This viewpoint has also motivated
a line of our recent work on random-walk-based community detection, including
an extended Walktrap method for large networks, random-walk refinements of
Louvain-type algorithms, and hitting-time-based methods for directed graphs
\cite{DoPhan2022ExtendedWalktrap,DoPhan2025LouvainRW,DoNguyenPhan2025DFLouvainRW,DangDoPhan2023}. 
In directed graphs, this perspective is especially natural, since local degree
information alone often fails to describe the large-scale organization induced
by asymmetric transport \cite{MalliarosVazirgiannis2013}. Once one takes this
viewpoint seriously, a vertex should no longer be represented merely by its
adjacency pattern, but by the way it reaches the rest of the graph under the
directed random walk.

From this random-walk perspective, hitting times and related accessibility
quantities are natural candidates for vertex representation. If two vertices
belong to the same community, then one expects them to ``see'' the rest of the
graph in similar ways, and hence their global accessibility profiles should be
similar. This viewpoint is closely related to hitting-time-based approaches
for directed community detection, such as the framework of
Dang, Do, and Phan~\cite{DangDoPhan2023}, where vertices are represented
through stationary-weighted hitting-time information. Such methods support the
idea that hitting times encode meaningful global accessibility structure in
directed graphs.

The present paper asks a different geometric question. Suppose one wants to
compare hitting-time-type accessibility profiles by cosine similarity, so that
community similarity is interpreted as alignment of profiles rather than as a
distance between them. This choice is natural for profile-based representations:
cosine similarity compares the directions of two vectors and is therefore
insensitive to their overall scale. It is especially useful when vertices are
represented by high-dimensional accessibility, diffusion, or embedding
profiles, where the norm may reflect activity level, degree, or total mass,
whereas the direction reflects the relative pattern of interaction. This
principle is classical in spherical clustering \cite{DhillonModha2001,Hornik2012}
and is also closely related to modern graph representation methods, where
vertices are mapped to vector spaces and then compared by inner-product or
cosine-type similarities \cite{Perozzi2014,Tang2015,GroverLeskovec2016}.
Cosine-type indices have also appeared in network similarity and link
prediction, for example through the Salton index \cite{LuZhou2011}, and recent
embedding-based approaches to directed community detection also use cosine
similarity between node representations \cite{YuJiaoDehmerEmmertStreib2024}.
Cosine similarity has also been used directly in overlap-expansion frameworks;
in particular, our previous directed cosine-based overlap method assigns
additional memberships by comparing vertices and communities through cosine
scores \cite{DoPhan2025OverlapCosine}.

The present paper continues this cosine-based line, but changes the object of
comparison. Instead of applying cosine similarity to the earlier directed
cosine representation, we construct stationary-centered forward--backward
Green diffusion profiles and apply cosine similarity to these corrected
accessibility coordinates. Thus, the use of cosine similarity is not
arbitrary: it provides a scale-free way to compare the alignment of vertex
profiles. However, it also raises a crucial question: which accessibility
profiles are appropriate objects for cosine comparison?

This question is important because cosine similarity, although powerful for
comparing profile alignment, is sensitive to translations of the feature
space. Raw hitting-time vectors are therefore not directly compatible with
cosine comparison. The reason is that they contain a common stationary
baseline; equivalently, they differ from a centered accessibility profile by a
translation that is independent of the source vertex. Since cosine similarity
is not translation-invariant, applying cosine directly to raw hitting-time
vectors can distort the intended geometry. Thus, the difficulty is not that
hitting-time information is irrelevant, but rather that raw hitting-time
coordinates are not the correct objects on which cosine should act.

This observation leads naturally to the Green operator of the directed random
walk. Green profiles may be viewed as the centered counterparts of
hitting-time-type information, obtained after removing the stationary
background. For a source vertex \(i\) and a target vertex \(k\), the Green
entry \(\Green(i,k)\) measures whether \(k\) is reached from \(i\) more
easily or less easily than under the stationary regime
\cite{KemenySnell1976,Hunter1982,LiZhang2012}. Hence the row
\(\Green(i,\cdot)\) records the full \emph{relative accessibility profile} of
vertex \(i\). If two vertices belong to the same community, then they should
play similar roles in the global transport structure of the graph, and thus
their Green profiles should be similar. In this sense, Green geometry is not
an arbitrary algebraic construction, but the natural corrected geometry that
emerges when one tries to combine hitting-time intuition with cosine
comparison.

Yet in a directed graph, a one-sided Green profile is still not always
sufficient. The row \(\Green(i,\cdot)\) captures how the walk starting from
\(i\) reaches the rest of the graph, namely the outgoing or
source-to-target aspect of directed diffusion. But a directed vertex also has
an incoming role: it may be reached from the rest of the graph in a way that
is structurally different from another vertex with a similar outgoing
profile. Thus, two vertices may have similar forward accessibility while
belonging to quite different incoming environments. If one seeks a faithful
geometry for directed community detection, both aspects should be taken into
account.

This leads to the main methodological idea of the paper. For each vertex, we
combine a forward Green profile and a backward Green profile into a single
forward--backward Green coordinate. The forward profile describes how the
vertex reaches the rest of the network, whereas the backward profile describes
how the rest of the network reaches that vertex through reversed diffusion.
Thus, the coordinate captures both outgoing and incoming diffusion roles.

We use this geometry in a two-stage framework. First, an initial disjoint
partition is constructed, either by an existing non-overlapping community
detection method or by the proposed Di-Green-FB-cosine-KMeans algorithm.
Second, a community-adaptive cosine rule expands this partition into an
overlapping cover by assigning a vertex to an additional community when its
forward--backward Green profile is sufficiently aligned with that community.
Thus, both stages are governed by the same principle: community membership is
interpreted as alignment in forward--backward Green diffusion space.

Accordingly, the paper develops two concrete algorithms. The first algorithm,
Di-Green-FB-cosine-KMeans, is designed for disjoint directed community
detection. It constructs forward--backward Green diffusion coordinates and
applies spherical \(K\)-means in the resulting cosine geometry. The second
algorithm, Di-Green-FB-Cosine Overlap, is designed for overlapping directed
community detection. It starts from an initial disjoint partition and expands
it into an overlapping cover using community-adaptive Green cosine thresholds.
In this way, the disjoint and overlapping tasks are treated within a single
geometric framework.

The paper also gives basic theoretical support for this construction, including
centering identities for the Green operator, a baseline-distortion result for
raw hitting-time cosine, a finite-time truncation bound, a kernel property for
the forward--backward cosine, and deterministic margin conditions for the
idealized cosine assignment rules.

\subsection{Main contributions}

The main contributions of this paper are as follows.

\begin{itemize}
    \item \textbf{Baseline distortion of raw hitting-time cosine.}
    We identify a geometric obstruction in applying cosine similarity directly
    to raw hitting-time profiles. Although hitting times contain useful
    accessibility information for directed graphs, their rows include a common
    source-independent stationary baseline. Since cosine similarity is not
    translation-invariant, this baseline may dominate the comparison and distort
    the intended community geometry.

    \item \textbf{Centered Green diffusion geometry.}
    We replace raw hitting-time profiles by centered Green diffusion profiles.
    The Green operator removes the stationary background and represents the
    source-dependent deviation of a directed random walk from stationarity. We
    connect this representation to hitting-time deviations, Poisson-type
    identities on the stationary-free subspace, and directed spectral geometry.

    \item \textbf{Forward--backward Green cosine representation.}
    To capture the asymmetric nature of directed graphs, we combine the Green
    profile of the original random walk with the corresponding profile on the
    edge-reversed graph. The resulting forward--backward coordinate encodes both
    outgoing and incoming diffusion roles of a vertex. We show that the induced
    forward--backward cosine similarity is a positive semidefinite kernel on the
    normalized Green coordinates.

    \item \textbf{Algorithms for disjoint and overlapping community detection.}
    Based on the same Green cosine geometry, we propose two concrete
    algorithms. The first is Di-Green-FB-cosine-KMeans, a disjoint directed
    clustering algorithm that constructs forward--backward Green diffusion
    coordinates and applies spherical \(K\)-means in the resulting cosine
    space. The second is Di-Green-FB-Cosine Overlap, a community-adaptive
    expansion algorithm that converts an initial disjoint partition into an
    overlapping cover. Thus, the paper provides a unified geometric framework
    for both non-overlapping and overlapping community detection in directed
    networks.
    \item \textbf{Experimental evaluation.}
    We evaluate the proposed framework on synthetic directed benchmarks,
    overlapping directed benchmarks, and real directed networks. The experiments
    compare the proposed method with raw hitting-time cosine variants, directed
    spectral baselines, flow-based references, and overlap-expansion baselines.
    The results show that centering through Green profiles and combining
    forward and backward diffusion information improve the robustness of cosine
    geometry for directed community detection.
\end{itemize}

\subsection{Organization of the paper}

The remainder of the paper is organized as follows. In
\cref{sec:related} we review related work. In
\cref{sec:green-geometry} we develop the forward--backward Green geometry and
its basic theoretical guarantees. This section starts from teleported directed
random walks, stationary centering, Green operators, and the hitting-time
baseline problem, and then introduces the resulting forward--backward cosine
geometry. In \cref{sec:green-fb-disjoint} we present the first algorithm,
Di-Green-FB-cosine-KMeans, for disjoint directed community detection. In
\cref{sec:green-fb-overlap} we present the second algorithm,
Di-Green-FB-Cosine Overlap, for expanding a disjoint partition into an
overlapping cover, and we discuss its computational complexity. In
\cref{sec:experiments} we evaluate the proposed framework on synthetic
directed benchmarks, overlapping directed benchmarks, and real directed
networks. The paper ends with concluding remarks and future directions.

\section{Related Work}
\label{sec:related}

Community detection in directed graphs has been studied from several
viewpoints, including modularity optimization, probabilistic block models,
directed spectral clustering, and random-walk-based approaches
\cite{Fortunato2010,MalliarosVazirgiannis2013}. Since the present work is
motivated by a random-walk and hitting-time perspective, the most relevant
prior work lies in the random-walk, Markov-chain, and directed spectral
directions.

A broad class of methods uses random walks and flow-based quantities to infer
community structure. These methods rely on the fact that a random walk tends
to spend relatively long times inside coherent regions of the graph, thereby
revealing community boundaries through flow persistence, multi-step
accessibility, or first-passage behavior \cite{PonsLatapy2006,RosvallBergstrom2008}.
In directed graphs, this viewpoint is particularly natural because asymmetric
transport may create large-scale structures that are invisible to purely local
edge statistics \cite{MalliarosVazirgiannis2013}. The present paper is close
in spirit to this tradition, but differs in that it seeks a vertex geometry
built from global relative accessibility profiles rather than from local flow
or short-time transition information alone.

The random-walk viewpoint also naturally leads to hitting times, the
fundamental matrix, and related Green-function-type quantities. These objects
form a classical part of finite Markov chain theory
\cite{KemenySnell1976,Hunter1982} and have also been used in
hitting-time-based methods for directed community detection
\cite{DangDoPhan2023}. In particular, they provide global
descriptions of accessibility and first-passage structure, and thereby offer
a natural source of vertex representations for community analysis. Our work
draws directly on this perspective. At the same time, a central message of the
present paper is that if one wants to compare hitting-time-type profiles by
cosine similarity, then one should not use raw hitting-time vectors directly.
Instead, one should first remove the stationary background, which leads
naturally to Green coordinates.

Cosine similarity has also played an important role in clustering, network
analysis, graph representation learning, and overlap expansion. In
high-dimensional data analysis, spherical \(K\)-means uses cosine or angular
similarity as the basic clustering criterion
\cite{DhillonModha2001,Hornik2012}. In network science, cosine-type
normalizations appear in classical node-similarity indices such as the Salton
index \cite{LuZhou2011}. More broadly, graph representation learning methods
such as DeepWalk, LINE, and node2vec represent vertices as vectors whose
relative positions encode network neighborhoods or structural proximity
\cite{Perozzi2014,Tang2015,GroverLeskovec2016}. Community detection based on
graph embeddings often follows this general strategy: first construct a
vector representation of vertices, and then cluster or compare vertices in
the embedding space \cite{Tandon2021,YuJiaoDehmerEmmertStreib2024}.

The present paper is also related to our previous directed cosine-based
overlap expansion method \cite{DoPhan2025OverlapCosine}, where cosine
similarity was used to assign additional memberships after an initial
disjoint partition had been obtained. The key difference is the representation
on which cosine similarity acts. Instead of applying cosine similarity to
adjacency-type features, generic embeddings, the earlier directed cosine
representation, or raw hitting-time rows, we apply it to
stationary-centered forward--backward Green diffusion profiles. Thus, the
novelty is not merely the use of cosine similarity, but the construction of a
directed Green geometry on which cosine comparison becomes meaningful.

A second closely related line of work concerns directed spectral geometry. In
directed graphs, Laplacian-like operators are generally non-symmetric, which
requires singular-value-based constructions or carefully normalized operators.
Directed Laplacian and Diplacian frameworks provide principled ways to extract
meaningful low-dimensional structure from non-reversible random walks
\cite{Chung2005,LiZhang2012}. In particular, the Diplacian offers a spectral
counterpart to Green and fundamental-matrix-based viewpoints
\cite{LiZhang2012}. Our approach is not a purely spectral method, but it is
conceptually related to this literature because it also seeks a directed
vertex geometry governed by random-walk structure rather than by local edge
density alone.

Overlapping community detection has been extensively studied in undirected
networks \cite{Peel2017,Jebabli2018,Harenberg2014}. A common strategy is to
first compute a disjoint partition and then enlarge it using a local,
semilocal, or similarity-based criterion \cite{Ponomarenko2021,Chen2010}. Our
method belongs to this two-step family and is directly connected to our
previous directed cosine-based overlap expansion method
\cite{DoPhan2025OverlapCosine}. However, the present work changes the
geometric representation used in the expansion step. Rather than relying on
local adjacency counts or the earlier directed cosine representation, it uses
stationary-centered forward--backward Green diffusion profiles.

This distinction is important because the proposed Green representation is
used not only for overlap expansion, but also for constructing the initial
disjoint partition. Thus, the same geometry governs both stages of the
framework: first to construct a disjoint directed partition, and then to
assign overlapping memberships by a community-adaptive cosine rule.

The present work combines these viewpoints by using Green-function-based
random-walk profiles as a directed vertex geometry. Its main distinction is
that both the disjoint clustering step and the overlap assignment step are
performed in the same forward--backward Green cosine space, rather than using
local edge counts, density gains, or a separate overlap heuristic.

\section{Forward--Backward Green Geometry and Basic Guarantees}
\label{sec:green-geometry}

This section develops the geometric foundation of the proposed method. The
main idea is to represent each vertex by a centered random-walk accessibility
profile and to compare such profiles by cosine similarity. We first define
forward and backward teleported random walks. We then introduce the centered
Green operator, explain its relation with hitting times, and show why raw
hitting-time rows are not suitable for direct cosine comparison. Finally, we
combine forward and backward Green profiles into a single directed cosine
geometry and record several basic theoretical guarantees.

\subsection{Teleported forward and backward random walks}

Let \(G=(V,E,A)\) be a finite directed weighted graph with vertex set \(V\),
\(|V|=n\), and nonnegative weighted adjacency matrix
\[
A=(A_{ij})_{1\le i,j\le n}\in\mathbb R^{n\times n}.
\]
Here \(A_{ij}>0\) means that there is a directed edge from \(i\) to \(j\).
Weighted edges are allowed, since the random-walk construction and the Green
operator apply without change to nonnegative weighted graphs.

We identify the vertex set \(V\) with \(\{1,\ldots,n\}\). Throughout the
paper, \(\mathbf 1\in\mathbb R^n\) denotes the all-ones column vector, and
\(I\) denotes the identity matrix of the appropriate size. For a matrix \(M\),
we write \(M_{i\bullet}\) for its \(i\)-th row. The notation
\(\mathbf 1\{\cdot\}\) denotes the indicator of an event or condition.

For a nonnegative matrix \(B\), define its row-normalized transition matrix
\(\widetilde P(B)\) by
\[
\widetilde P(B)_{ij}
=
\begin{cases}
\dfrac{B_{ij}}{\sum_k B_{ik}}, & \sum_k B_{ik}>0,\\[6pt]
\dfrac{1}{n}, & \sum_k B_{ik}=0.
\end{cases}
\]
Thus dangling rows are replaced by the uniform distribution. Given a
teleportation parameter \(0<\alpha<1\), as in PageRank-type random-walk
regularization \cite{Page1999}, define the teleported transition matrix
\[
P_\alpha(B)
=
\alpha \widetilde P(B)
+
(1-\alpha)\frac{1}{n}\mathbf 1\mathbf 1^\top.
\]
Teleportation has two roles. First, it guarantees that the Markov chain is
irreducible and aperiodic. Second, it makes the stationary distribution and
Green operator well defined even when the original directed graph is not
strongly connected. In applications one usually takes \(\alpha\) close to
one, so that the walk still follows the directed edges most of the time while
remaining globally regularized.

In a directed graph, outgoing and incoming diffusion may carry different
structural information. We therefore use two teleported walks. The forward walk is
\begin{equation}
P_\alpha^{+}=P_\alpha(A),
\label{eq:forward-teleport}
\end{equation}
and the backward walk is the teleported walk on the edge-reversed graph,
\begin{equation}
P_\alpha^{-}=P_\alpha(A^\top).
\label{eq:backward-teleport}
\end{equation}
The forward walk describes how a vertex reaches the rest of the graph in the
original edge orientation. The backward walk describes diffusion on the graph
with all edges reversed, and is used to encode incoming accessibility patterns
of vertices in the original graph.

We emphasize that \(P_\alpha^{-}\) is not the Markov-chain time reversal of
\(P_\alpha^{+}\). The time reversal of a Markov chain with transition matrix
\(P\) and stationary distribution \(\pi\) depends on \(\pi\) and has transition
probabilities
\[
P^{\ast}_{ij}
=
\frac{\pi_j P_{ji}}{\pi_i}.
\]
In contrast, \(P_\alpha^{-}=P_\alpha(A^\top)\) is the ordinary random walk on
the edge-reversed graph. Its role in this paper is structural: it provides a
second diffusion view that captures how a vertex is accessed through incoming
edges.

We denote the stationary distributions of \(P_\alpha^{+}\) and
\(P_\alpha^{-}\) by \(\pi^{+}\) and \(\pi^{-}\), respectively.

\subsection{Stationary background and centered Green operator}

We now describe the Green operator for a generic teleported transition matrix
\(P\). Let \(\phi\) be the unique stationary distribution of \(P\):
\[
\phi^\top P=\phi^\top,
\qquad
\phi_i>0,
\qquad
\sum_i\phi_i=1.
\]
Let
\[
\Phi=\diag(\phi_1,\ldots,\phi_n),
\qquad
\Pi=\mathbf 1\phi^\top.
\]
The rank-one matrix \(\Pi\) is the stationary background of the chain. Every
row of \(\Pi\) is equal to \(\phi^\top\), and
\[
P^t\longrightarrow \Pi
\qquad\text{as }t\to\infty.
\]
Thus, after many steps, the walk no longer remembers its starting vertex. For
community detection, however, the relevant information is precisely the
source-dependent deviation from this limiting background. This is why the
central object in the paper is the centered Green operator.

The fundamental matrix is
\begin{equation}
Z(P)
=
\left(I-P+\Pi\right)^{-1}.
\label{eq:fundamental-new}
\end{equation}
Equivalently,
\[
Z(P)=\sum_{t=0}^{\infty}(P-\Pi)^t.
\]
Since \((P-\Pi)^t=P^t-\Pi\) for \(t\geq 1\), the centered Green operator is
\begin{equation}
\mathcal G(P)
=
Z(P)-\Pi
=
\sum_{t=0}^{\infty}(P^t-\Pi).
\label{eq:green-series}
\end{equation}
The entry \(\mathcal G(P)_{ik}\) accumulates, over all time scales, the excess
or deficit of the probability of being at \(k\) when the walk starts from
\(i\), relative to the stationary regime.

\begin{proposition}[Poisson identities for the Green operator]
\label{prop:green-poisson}
The Green operator satisfies
\[
\mathcal G(P)\mathbf 1=0,
\qquad
\phi^\top \mathcal G(P)=0,
\]
and
\[
(I-P)\mathcal G(P)=I-\Pi,
\qquad
\mathcal G(P)(I-P)=I-\Pi.
\]
Thus, \(\mathcal G(P)\) plays the role of the inverse of \(I-P\) on the
stationary-free subspace. More precisely, it inverts \(I-P\) after the
rank-one stationary component represented by \(\Pi\) has been removed.
\end{proposition}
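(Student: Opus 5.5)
The plan is to work algebraically with $\mathcal G(P)=Z(P)-\Pi$, where $Z(P)=(I-P+\Pi)^{-1}$. The series representation will only be used for intuition, so the argument does not depend on convergence questions. Three elementary facts about $\Pi=\mathbf 1\phi^\top$ drive everything:
\[
P\Pi=\Pi \quad(\text{since }P\mathbf 1=\mathbf 1),\qquad
\Pi P=\Pi \quad(\text{since }\phi^\top P=\phi^\top),\qquad
\Pi^2=\Pi \quad(\text{since }\phi^\top\mathbf 1=1).
\]

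\textbf{Step 1: invertibility.} First I will check that $I-P+\Pi$ is invertible. Suppose $(I-P+\Pi)x=0$. Multiplying on the left by $\phi^\top$ and using $\phi^\top(I-P)=0$ and $\phi^\top\Pi=\phi^\top$ gives $\phi^\top x=0$. Hence $\Pi x=0$, and therefore $Px=x$. Because the teleported chain is irreducible, the only right eigenvectors for the eigenvalue $1$ are the multiples of $\mathbf 1$, so $x=c\mathbf 1$. Then $0=\phi^\top x=c$. This step is the only one that needs anything beyond algebra, namely the uniqueness of the stationary distribution and irreducibility. I regard it as the main (mild) obstacle, and it is handled by the Perron--Frobenius simplicity of the eigenvalue $1$, which is guaranteed here because $P$ has all entries positive under teleportation.

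\textbf{Step 2: how $Z$ acts on $\mathbf 1$ and $\phi^\top$.} Since $(I-P+\Pi)\mathbf 1=\mathbf 1$, we get $Z\mathbf 1=\mathbf 1$. Similarly $\phi^\top(I-P+\Pi)=\phi^\top$, so $\phi^\top Z=\phi^\top$. These identities give $Z\Pi=\mathbf 1\phi^\top=\Pi$ and $\Pi Z=\Pi$. The centering identities then follow directly:
\[
\mathcal G\mathbf 1=Z\mathbf 1-\Pi\mathbf 1=\mathbf 1-\mathbf 1=0,\qquad
\phi^\top\mathcal G=\phi^\top Z-\phi^\top\Pi=\phi^\top-\phi^\top=0.
\]

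\textbf{Step 3: Poisson identities.} Write $I-P=(I-P+\Pi)-\Pi$. Then
\[
(I-P)\mathcal G=(I-P+\Pi)Z-(I-P+\Pi)\Pi-\Pi Z+\Pi^2 .
\]
The first term equals $I$. For the second, $(I-P+\Pi)\Pi=\Pi-\Pi+\Pi=\Pi$. The third is $\Pi Z=\Pi$ from Step 2, and the fourth is $\Pi^2=\Pi$. Altogether this gives $I-\Pi-\Pi+\Pi=I-\Pi$. The identity $\mathcal G(I-P)=I-\Pi$ follows by the mirror computation, using $Z(I-P+\Pi)=I$, $\Pi(I-P+\Pi)=\Pi$ and $Z\Pi=\Pi$.

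The closing interpretive sentence of the proposition then follows from these identities. On the subspace $\{x:\phi^\top x=0\}$, which is invariant under $I-P$, the operator $I-\Pi$ is the identity, so $\mathcal G$ inverts $I-P$ there. Optionally, I would also remark that $(P-\Pi)^t=P^t-\Pi$ for $t\ge1$, which is proved by induction from the three facts above. Combined with $\rho(P-\Pi)<1$, this reconciles the algebraic definition with the series in \eqref{eq:green-series}. That reconciliation is not needed for the proposition itself.
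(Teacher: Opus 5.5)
Your proof is correct, but it takes a different route from the paper's.

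\textbf{The paper's route.} The paper works entirely with the series $\mathcal G(P)=\sum_{t\ge0}(P^t-\Pi)$. Each term $P^t-\Pi$ is annihilated by $\mathbf 1$ on the right and by $\phi^\top$ on the left, which gives the centering identities. The Poisson identities then come from the telescoping
\[
(I-P)\sum_{t=0}^{T}(P^t-\Pi)=I-P^{T+1},
\]
together with $P^{T+1}\to\Pi$.

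\textbf{Your route.} You take the closed form $\mathcal G=Z-\Pi$ with $Z=(I-P+\Pi)^{-1}$. Everything then follows from $P\Pi=\Pi P=\Pi^2=\Pi$, together with $Z\mathbf 1=\mathbf 1$ and $\phi^\top Z=\phi^\top$.

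\textbf{What each buys.} Your version is limit-free. It needs only that the eigenvalue $1$ of $P$ has a one-dimensional right eigenspace; for a stochastic matrix this amounts to uniqueness of $\phi$. It does not need aperiodicity, and it does not need convergence of the series. It also proves the invertibility of $I-P+\Pi$, which the paper takes for granted when it writes down $Z(P)$. The paper's version is shorter and more transparent probabilistically: it verifies the identities term by term on the accumulated excess probabilities, which is the same form later reused for the truncated diffusive profiles. However, it implicitly relies on two things it does not prove: convergence of the series, and the asserted equivalence $Z=\sum_t(P-\Pi)^t$. Under teleportation, both follow from the geometric bound $\|P^t_{i\bullet}-\phi^\top\|_1\le 2\alpha^t$ used in Proposition~\ref{prop:green-truncation-bound}. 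Your optional remark, combining $(P-\Pi)^t=P^t-\Pi$ for $t\ge1$ with $\rho(P-\Pi)<1$, is exactly what connects the two definitions.
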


\begin{proof}
Using
\[
\mathcal G(P)=\sum_{t=0}^{\infty}(P^t-\Pi),
\]
we have
\[
(P^t-\Pi)\mathbf 1=P^t\mathbf 1-\Pi\mathbf 1=\mathbf 1-\mathbf 1=0.
\]
Hence \(\mathcal G(P)\mathbf 1=0\). Similarly, since
\(\phi^\top P^t=\phi^\top\) and \(\phi^\top\Pi=\phi^\top\), we obtain
\[
\phi^\top(P^t-\Pi)=0
\]
for every \(t\), and therefore \(\phi^\top\mathcal G(P)=0\).

For \(T\geq 0\),
\[
(I-P)\sum_{t=0}^T(P^t-\Pi)
=
\sum_{t=0}^T(P^t-P^{t+1})
=
I-P^{T+1},
\]
because \(P\Pi=\Pi\). Letting \(T\to\infty\) gives
\[
(I-P)\mathcal G(P)=I-\Pi.
\]
The identity on the right is proved similarly, using \(\Pi P=\Pi\):
\[
\sum_{t=0}^T(P^t-\Pi)(I-P)
=
I-P^{T+1}
\longrightarrow I-\Pi.
\]
\end{proof}

This proposition gives the precise sense in which the Green operator is a
centered inverse of the random-walk operator. It removes the stationary
direction and keeps only the source-dependent accessibility structure.

\subsection{Hitting-time baseline and cosine distortion}

The Green operator is closely connected with hitting times. Let \(H(i,k)\)
denote the expected hitting time from vertex \(i\) to vertex \(k\), with the
convention
\[
    H(k,k)=0.
\]
Define
\[
H(\phi,k)=\sum_{x\in V}\phi_x H(x,k).
\]
With this convention, the standard fundamental-matrix identity gives
\[
    H(i,k)=\frac{Z(P)_{kk}-Z(P)_{ik}}{\phi_k}.
\]
Equivalently,
\begin{equation}
\mathcal G(P)_{ik}
=
\phi_k\bigl(H(\phi,k)-H(i,k)\bigr).
\label{eq:green-hitting-new}
\end{equation}
Thus \(\mathcal G(P)_{ik}\) is positive when \(k\) is reached from \(i\)
earlier than under the stationary baseline, and negative when \(k\) is reached
later than under that baseline.

This identity also explains why raw hitting-time rows are not the right
objects for cosine comparison. For each source vertex \(i\), define
\[
H_i=(H(i,1),\ldots,H(i,n)),
\qquad
h=(H(\phi,1),\ldots,H(\phi,n)).
\]
Then \eqref{eq:green-hitting-new} implies
\begin{equation}
H_i
=
h-\mathcal G(P)_{i\bullet}\Phi^{-1}.
\label{eq:hitting-baseline-decomposition}
\end{equation}
Therefore every raw hitting-time row contains the same source-independent
baseline \(h\). The source-dependent information is carried by the centered
Green profile, after stationary reweighting.

\begin{proposition}[Hitting-time baseline and cosine distortion]
\label{prop:hitting-cosine-baseline}
Raw hitting-time rows have the decomposition
\[
H_i=h-\mathcal G(P)_{i\bullet}\Phi^{-1},
\]
where \(h\) is independent of the source vertex \(i\). Since cosine similarity
is not translation-invariant, comparing the raw rows \(H_i\) by cosine may
mainly measure their common baseline rather than their source-dependent
accessibility profiles.

A simple limiting calculation illustrates this obstruction. If \(h\neq 0\)
and \(a,b\in\mathbb R^n\), then
\[
\lim_{\gamma\to\infty}
\frac{\langle \gamma h+a,\gamma h+b\rangle}
{\|\gamma h+a\|_2\|\gamma h+b\|_2}
=
1.
\]
Thus, when a large common baseline is present, cosine similarity can become
insensitive to the source-dependent deviations \(a\) and \(b\).
\end{proposition}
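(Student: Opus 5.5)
The statement has two parts: the decomposition of $H_i$, and the limiting cosine calculation. I will handle them separately.

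\emph{The decomposition.} I take the fundamental-matrix identity $H(i,k)=(Z(P)_{kk}-Z(P)_{ik})/\phi_k$ as given. The plan is to rewrite it using $\mathcal G(P)=Z(P)-\Pi$. Since $\Pi_{ik}=\phi_k$ for every $i$, we have $Z(P)_{kk}-Z(P)_{ik}=\mathcal G(P)_{kk}-\mathcal G(P)_{ik}$, and hence
\[
H(i,k)=\frac{\mathcal G(P)_{kk}-\mathcal G(P)_{ik}}{\phi_k}.
\]
Next I average over $i$ with weights $\phi_i$. The row identity $\phi^\top\mathcal G(P)=0$ from \cref{prop:green-poisson} gives $\sum_i\phi_i\mathcal G(P)_{ik}=0$, so
\[
H(\phi,k)=\mathcal G(P)_{kk}/\phi_k .
\]
Subtracting the two displays yields \eqref{eq:green-hitting-new}. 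Reading that identity row-wise for a fixed source $i$, with the stationary reweighting written as right multiplication by $\Phi^{-1}$, gives $H_i=h-\mathcal G(P)_{i\bullet}\Phi^{-1}$. The vector $h$ does not depend on $i$ by construction. The remark about cosine not being translation-invariant is interpretive, and the limit below makes it precise.

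\emph{The limit.} For $\gamma>0$ I divide numerator and denominator by $\gamma^2$. The numerator becomes
\[
\|h\|^2+\gamma^{-1}\langle h,a+b\rangle+\gamma^{-2}\langle a,b\rangle\to\|h\|^2 .
\]
In the denominator, $\|\gamma h+a\|/\gamma=\|h+a/\gamma\|\to\|h\|$ by continuity of the norm, and likewise for $b$. Because $h\neq0$, the limit of the denominator is $\|h\|^2>0$. The quotient is therefore well defined for all large $\gamma$ (both vectors are nonzero once $\gamma>\|a\|/\|h\|$ and $\gamma>\|b\|/\|h\|$) and tends to $\|h\|^2/\|h\|^2=1$.

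\emph{Main obstacle.} Neither step is deep. The one point needing care is the averaging step: its use of $\phi^\top\mathcal G(P)=0$, and the convention $H(k,k)=0$, which is built into the assumed identity. I would also state briefly that the fundamental-matrix hitting-time identity is taken as standard \cite{KemenySnell1976}.
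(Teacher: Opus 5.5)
Your proof is correct and follows the paper's argument. The decomposition is read off row-wise from the Green--hitting-time identity \eqref{eq:green-hitting-new}, and the limit is obtained by dividing numerator and denominator by $\gamma^2$. Your additions are refinements rather than a different route: you derive \eqref{eq:green-hitting-new} from the fundamental-matrix formula using $\phi^\top\mathcal G(P)=0$, which the paper only asserts with an ``equivalently'', and you check that the quotient is well defined for large $\gamma$.
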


\begin{proof}
The decomposition follows directly from
\[
\mathcal G(P)_{ik}
=
\phi_k\bigl(H(\phi,k)-H(i,k)\bigr),
\]
which is equivalent to
\[
H(i,k)=H(\phi,k)-\frac{\mathcal G(P)_{ik}}{\phi_k}.
\]
Writing this identity for all \(k\) gives
\[
H_i=h-\mathcal G(P)_{i\bullet}\Phi^{-1}.
\]

It remains to justify the baseline distortion statement. Dividing numerator
and denominator by \(\gamma^2\), we obtain
\[
\frac{\langle \gamma h+a,\gamma h+b\rangle}{\gamma^2}
=
\|h\|_2^2
+
\frac{\langle h,a\rangle+\langle h,b\rangle}{\gamma}
+
\frac{\langle a,b\rangle}{\gamma^2},
\]
while
\[
\frac{\|\gamma h+a\|_2}{\gamma}\to \|h\|_2,
\qquad
\frac{\|\gamma h+b\|_2}{\gamma}\to \|h\|_2.
\]
Therefore the cosine tends to \(1\).
\end{proof}

This proposition does not say that hitting times are uninformative. Rather, it
says that if hitting-time information is to be compared by cosine similarity,
one should first remove the common stationary baseline. The Green operator is
precisely the centered object that performs this correction.

\begin{remark}[Connection with directed spectral geometry]
The Green operator also has a spectral interpretation. Define
\[
\Gamma=\Phi^{1/2}(I-P)\Phi^{-1/2}.
\]
This operator is a normalized directed random-walk operator whose null
direction is \(\Phi^{1/2}\mathbf 1\). The Green operator is the group inverse
of \(I-P\) on the stationary-free subspace, and its normalized version
\[
    \Phi^{1/2}\mathcal G(P)\Phi^{-1/2}
\]
is the corresponding inverse of \(\Gamma\) on the subspace orthogonal to
\(\Phi^{1/2}\mathbf 1\).

Thus the Green viewpoint and the directed spectral viewpoint are closely
connected: both describe the same centered random-walk geometry, one from a
probabilistic accessibility perspective and the other from an
operator-theoretic perspective. A singular-vector representation of this
normalized inverse gives the corresponding directed spectral form.
\end{remark}

\subsection{Forward--backward Green cosine coordinates}
\label{sec:green-fb-coordinates}

The row \(\mathcal G(P)_{i\bullet}\) records how the walk starting from \(i\)
sees the rest of the graph after the stationary background has been removed.
Vertices in the same directed community should therefore have similar
relative accessibility profiles. This gives the basic geometric principle:
vertices are close when their centered diffusion profiles are aligned.

In a directed graph, however, one profile is not enough. The forward Green
profile captures the outgoing or source-to-target role of a vertex. A vertex
also has an incoming role: it may be reached from the rest of the graph in a
way that is not determined by its outgoing behavior. Hence a directed
community geometry should combine both forward and backward accessibility.

For constructing vertex coordinates, we do not use the full Green row
directly. Instead, we use its post-transition, or diffusive, part. For a
transition matrix \(P\) with stationary distribution \(\pi\), define
\begin{equation}
\mathcal G_{T}^{\mathrm{diff}}(P)
=
\sum_{t=1}^{T}
\left(P^t-\mathbf 1\pi^\top\right).
\label{eq:green-truncated-diffusive}
\end{equation}
The sum starts from \(t=1\), rather than \(t=0\). Recall that the full centered
Green operator is
\[
    \mathcal G(P)
    =
    \sum_{t=0}^{\infty}
    \left(P^t-\mathbf 1\pi^\top\right).
\]
Therefore its infinite diffusive part is
\[
    \mathcal G_{\geq 1}(P)
    =
    \sum_{t=1}^{\infty}
    \left(P^t-\mathbf 1\pi^\top\right)
    =
    \mathcal G(P)-\left(I-\mathbf 1\pi^\top\right).
\]
Thus the coordinate used by the algorithm is a truncated approximation of
\(\mathcal G_{\geq 1}(P)\), not of the full Green row
\(\mathcal G(P)\) itself.

This distinction is intentional. The removed term
\(I-\mathbf 1\pi^\top\) contains the time-zero self-spike, which records the
trivial fact that a walk starting from \(u\) is initially located at \(u\).
Since our goal is to compare how vertices diffuse through the graph after the
walk has moved, this self-spike is not used as part of the clustering
coordinate.

\begin{remark}[Green diffusion profiles versus reweighted hitting-time profiles]
The identity
\[
    H_i=h-\mathcal G(P)_{i\bullet}\Phi^{-1}
\]
shows that \(\mathcal G(P)\Phi^{-1}\) is the centered counterpart of the raw
hitting-time row. In the algorithm, however, we use the Green diffusion profile
\(\mathcal G_T^{\mathrm{diff}}(P)_{i\bullet}\) itself. This choice is
intentional. The Green row measures excess transition probability relative to
stationarity, whereas multiplication by \(\Phi^{-1}\) converts this quantity
back to a hitting-time scale and may strongly amplify vertices with small
stationary probability. Thus, the diagnostic experiment includes
\(\mathcal G(P)\Phi^{-1}\) to demonstrate the baseline correction, while the
proposed algorithm uses the unreweighted diffusive Green profile as a
probability-deviation coordinate.
\end{remark}

For each vertex \(u\), define the forward and backward diffusive Green
profiles by
\begin{equation}
x_u^{+}
=
\mathcal G_{T}^{\mathrm{diff}}(P_\alpha^{+})_{u\bullet},
\qquad
x_u^{-}
=
\mathcal G_{T}^{\mathrm{diff}}(P_\alpha^{-})_{u\bullet}.
\label{eq:forward-backward-green-profiles}
\end{equation}
The vector \(x_u^{+}\) describes the accumulated excess accessibility from
\(u\) to the rest of the graph after at least one transition, while
\(x_u^{-}\) gives the analogous profile on the edge-reversed graph.

We normalize the two profiles by
\begin{equation}
\widehat x_u^{+}
=
\frac{x_u^{+}}{\|x_u^{+}\|_2},
\qquad
\widehat x_u^{-}
=
\frac{x_u^{-}}{\|x_u^{-}\|_2}.
\label{eq:green-normalization}
\end{equation}
The theoretical discussion assumes nonzero norms. In numerical
implementations, if a one-sided norm is below a tolerance \(\tau>0\), we use
the regularized normalization
\[
\widehat x_u^{+}
=
\frac{x_u^{+}}{\max\{\|x_u^{+}\|_2,\tau\}},
\qquad
\widehat x_u^{-}
=
\frac{x_u^{-}}{\max\{\|x_u^{-}\|_2,\tau\}}.
\]

For \(\lambda\in[0,1]\), define the preliminary forward--backward coordinate
\[
\widetilde z_u^{(\lambda)}
=
\left(
\sqrt{\lambda}\,\widehat x_u^{+},
\sqrt{1-\lambda}\,\widehat x_u^{-}
\right).
\]
In the nondegenerate theoretical setting, where both one-sided profiles have
unit norm after normalization, we have
\(\|\widetilde z_u^{(\lambda)}\|_2=1\), and we set
\[
    z_u^{(\lambda)}=\widetilde z_u^{(\lambda)}.
\]
In numerical implementations, to keep the cosine geometry well defined even
in nearly degenerate cases, we apply a final normalization
\begin{equation}
z_u^{(\lambda)}
=
\frac{\widetilde z_u^{(\lambda)}}
{\max\{\|\widetilde z_u^{(\lambda)}\|_2,\tau\}}.
\label{eq:fb-green-coordinate}
\end{equation}
The square-root weights are chosen so that the inner product is a convex
combination of forward and backward cosine similarities. Define
\[
\cos^{+}(u,v)
=
\left\langle
\widehat x_u^{+},
\widehat x_v^{+}
\right\rangle,
\qquad
\cos^{-}(u,v)
=
\left\langle
\widehat x_u^{-},
\widehat x_v^{-}
\right\rangle.
\]
Then, in the nondegenerate case \(\|z_u^{(\lambda)}\|_2=1\), we have
\begin{align}
\left\langle z_u^{(\lambda)},z_v^{(\lambda)}\right\rangle
&=
\lambda
\left\langle
\widehat x_u^{+},
\widehat x_v^{+}
\right\rangle
+
(1-\lambda)
\left\langle
\widehat x_u^{-},
\widehat x_v^{-}
\right\rangle \notag\\
&=
\lambda \cos^{+}(u,v)
+
(1-\lambda)\cos^{-}(u,v).
\label{eq:fb-green-inner-product}
\end{align}
We define the forward--backward Green cosine similarity by
\begin{equation}
\operatorname{cos}_{\mathrm{FB}}(u,v)
=
\left\langle z_u^{(\lambda)},z_v^{(\lambda)}\right\rangle.
\label{eq:fb-green-cosine}
\end{equation}
In the experiments we use \(\lambda=1/2\), giving equal weight to outgoing
and incoming diffusion roles.

The resulting principle is the following: two vertices are likely to belong
to the same directed community when their forward accessibility profiles and
their backward accessibility profiles are both well aligned. This single
geometry will be used twice in the paper: first to construct an initial
disjoint partition, and then to expand that partition into an overlapping
cover.

\subsection{Basic theoretical properties}

We now record several basic properties of the proposed geometry. These results
are not meant to provide a complete statistical theory. Rather, they justify
the main geometric ingredients: finite-time Green profiles approximate exact
Green profiles under teleportation; normalization is stable away from zero;
the forward--backward cosine is a positive semidefinite kernel; and the
idealized disjoint and overlapping cosine assignment rules are exact under
deterministic margin conditions.

\begin{proposition}[Truncation error for the diffusive Green profile]
\label{prop:green-truncation-bound}
Let
\[
P=\alpha \widetilde P+(1-\alpha)\frac{1}{n}\mathbf 1\mathbf 1^\top,
\qquad 0<\alpha<1,
\]
where \(\widetilde P\) is row-stochastic. Let \(\pi\) be the stationary
distribution of \(P\), and let \(\Pi=\mathbf 1\pi^\top\). Define the infinite
diffusive Green profile by
\[
\mathcal G_{\geq 1}(P)
=
\sum_{t=1}^{\infty}(P^t-\Pi)
\]
and its truncation by
\[
\mathcal G_T^{\mathrm{diff}}(P)
=
\sum_{t=1}^{T}(P^t-\Pi).
\]
Then, for every vertex \(i\),
\[
\left\|
\mathcal G_{\geq 1}(P)_{i\bullet}
-
\mathcal G_T^{\mathrm{diff}}(P)_{i\bullet}
\right\|_1
\leq
\frac{2\alpha^{T+1}}{1-\alpha}.
\]
\end{proposition}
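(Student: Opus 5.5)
The difference is the tail
\[
\mathcal G_{\geq 1}(P)-\mathcal G_T^{\mathrm{diff}}(P)=\sum_{t=T+1}^{\infty}(P^t-\Pi).
\]
By the triangle inequality, its \(i\)-th row has \(\ell_1\) norm at most \(\sum_{t\ge T+1}\|(P^t)_{i\bullet}-\pi^\top\|_1\). So it suffices to establish the geometric contraction
\[
\|(P^t)_{i\bullet}-\pi^\top\|_1\le 2\alpha^t\qquad\text{for all }t\ge 0 .
\]
Summing the geometric series then gives \(2\alpha^{T+1}/(1-\alpha)\), which is exactly the claimed bound.

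To prove the contraction, I would use the structure of \(P\). Let \(\mu,\nu\) be probability row vectors, and write \(U=\frac1n\mathbf 1\mathbf 1^\top\). Since \(\mu U=\nu U=\frac1n\mathbf 1^\top\), we have
\[
\mu P-\nu P=\alpha(\mu-\nu)\widetilde P .
\]
Because \(\widetilde P\) is row-stochastic, it is a contraction in \(\ell_1\) acting on row vectors: \(\|w\widetilde P\|_1\le\|w\|_1\), since each row of \(\widetilde P\) has nonnegative entries summing to one. Hence \(\|\mu P-\nu P\|_1\le\alpha\|\mu-\nu\|_1\). Now take \(\mu=e_i^\top\) and \(\nu=\pi^\top\). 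Stationarity gives \(\pi^\top P^t=\pi^\top\), and iterating the contraction yields
\[
\|e_i^\top P^t-\pi^\top\|_1\le\alpha^t\|e_i^\top-\pi^\top\|_1\le 2\alpha^t .
\]
In fact \(\|e_i-\pi\|_1=2(1-\pi_i)<2\), but the crude bound is enough.

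Convergence of the infinite sum defining \(\mathcal G_{\geq 1}(P)\) follows from the same estimate: the terms are entrywise summable, so rearranging into the tail is legitimate.

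There is no genuine obstacle here. The one step to state carefully is the cancellation of the teleportation part. It relies on \(\mu-\nu\) having total mass zero, so that \((\mu-\nu)\mathbf 1\mathbf 1^\top=0\). This is what produces the factor \(\alpha\) per step, independently of any spectral information about \(\widetilde P\).
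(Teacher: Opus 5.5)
Your proof is correct and follows the same route as the paper: you bound the tail $\sum_{t\ge T+1}(P^t-\Pi)$ row by row using $\|P^t_{i\bullet}-\pi^\top\|_1\le 2\alpha^t$ and then sum the geometric series. The only difference is that the paper simply cites the bound of at most $\alpha$ on the Dobrushin coefficient, which comes from the common uniform component. You instead verify that bound directly, via the cancellation $(\mu-\nu)\mathbf 1\mathbf 1^\top=0$ and the $\ell_1$-contractivity of $\widetilde P$, which makes your argument self-contained.
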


\begin{proof}
The truncation error is
\[
\mathcal G_{\geq 1}(P)_{i\bullet}
-
\mathcal G_T^{\mathrm{diff}}(P)_{i\bullet}
=
\sum_{t=T+1}^{\infty}
\left(P^t_{i\bullet}-\pi^\top\right).
\]
For a teleported chain, the Dobrushin contraction coefficient is at most
\(\alpha\), because every transition contains a common uniform component of
mass \(1-\alpha\). Hence
\[
\|P^t_{i\bullet}-\pi^\top\|_1\leq 2\alpha^t.
\]
Summing the geometric tail gives
\[
\left\|
\mathcal G_{\geq 1}(P)_{i\bullet}
-
\mathcal G_T^{\mathrm{diff}}(P)_{i\bullet}
\right\|_1
\leq
\sum_{t=T+1}^{\infty}2\alpha^t
=
\frac{2\alpha^{T+1}}{1-\alpha}.
\]
\end{proof}

\begin{remark}[Interpretation of the truncation bound]
The bound in Proposition~\ref{prop:green-truncation-bound} is a worst-case
total-variation bound. It is not intended to be a tight prediction of the
empirical error for the sparse directed benchmark graphs used in the
experiments. For the parameter values used in the experiments, this worst-case
bound should not be interpreted as a numerical guarantee that a small value of
\(T\) already gives a small approximation error. Its purpose is only to show
that the finite-time diffusive Green coordinate converges to the infinite
diffusive Green coordinate under teleportation.

In practice, the effective mixing inside and across planted communities may
lead to substantially smaller errors than this worst-case estimate. The
sensitivity of the method with respect to \(T\) and \(\alpha\) should therefore
be assessed empirically.
\end{remark}

\begin{lemma}[Stability of normalization]
\label{lem:normalization-stability}
Let \(x,y\in\mathbb R^d\) satisfy
\[
\|x-y\|_2\leq \varepsilon,
\qquad
\|x\|_2\geq r,
\qquad
\|y\|_2\geq r
\]
for some \(r>0\). Then
\[
\left\|
\frac{x}{\|x\|_2}
-
\frac{y}{\|y\|_2}
\right\|_2
\leq
\frac{2\varepsilon}{r}.
\]
\end{lemma}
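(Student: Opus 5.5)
The plan is to insert an intermediate vector and use the triangle inequality. Write \(\widehat x = x/\|x\|_2\) and \(\widehat y = y/\|y\|_2\). I would compare both to the auxiliary vector \(y/\|x\|_2\), which has the same direction as \(y\) but is rescaled by the norm of \(x\). This splits the error into a ``direction'' part and a ``scale'' part:
\[
\left\|\widehat x-\widehat y\right\|_2
\le
\left\|\frac{x}{\|x\|_2}-\frac{y}{\|x\|_2}\right\|_2
+
\left\|\frac{y}{\|x\|_2}-\frac{y}{\|y\|_2}\right\|_2 .
\]

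The first term equals \(\|x-y\|_2/\|x\|_2\). Using \(\|x\|_2\ge r\), it is at most \(\varepsilon/r\).

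The second term equals \(\|y\|_2\,\bigl|1/\|x\|_2-1/\|y\|_2\bigr|\). This simplifies to \(\bigl|\|y\|_2-\|x\|_2\bigr|/\|x\|_2\). By the reverse triangle inequality, \(\bigl|\|y\|_2-\|x\|_2\bigr|\le\|x-y\|_2\le\varepsilon\). With \(\|x\|_2\ge r\) again, this term is also at most \(\varepsilon/r\).

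Adding the two estimates gives the claimed bound \(2\varepsilon/r\). Only the lower bound on \(\|x\|_2\) is actually used. The symmetric hypothesis \(\|y\|_2\ge r\) is needed only to ensure \(\widehat y\) is defined, i.e.\ \(y\neq 0\).

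There is no real obstacle here. The one point needing care is the choice of the intermediate vector. If the intermediate point is chosen so that the factor \(\|y\|_2\) cancels exactly in the second term, no upper bound on the norms is ever required. An alternative route through the formula \(\|\widehat x-\widehat y\|_2^2=2-2\cos\) would be messier, so I would avoid it.
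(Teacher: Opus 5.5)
Your proposal is correct and follows the paper's proof essentially verbatim: the paper writes \(x/\|x\|_2 - y/\|y\|_2 = (x-y)/\|x\|_2 + y\,(1/\|x\|_2 - 1/\|y\|_2)\), which is exactly your intermediate vector \(y/\|x\|_2\), and it bounds each piece by \(\varepsilon/r\) using the reverse triangle inequality. Your remark is also accurate for the paper's argument: only \(\|x\|_2\ge r\) is actually used, and \(\|y\|_2\ge r\) serves only to ensure that \(y/\|y\|_2\) is defined.
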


\begin{proof}
We write
\[
\frac{x}{\|x\|_2}
-
\frac{y}{\|y\|_2}
=
\frac{x-y}{\|x\|_2}
+
y\left(
\frac{1}{\|x\|_2}
-
\frac{1}{\|y\|_2}
\right).
\]
Therefore
\[
\left\|
\frac{x}{\|x\|_2}
-
\frac{y}{\|y\|_2}
\right\|_2
\leq
\frac{\|x-y\|_2}{\|x\|_2}
+
\frac{|\|y\|_2-\|x\|_2|}{\|x\|_2}
\leq
\frac{2\varepsilon}{r}.
\]
\end{proof}

\begin{proposition}[Kernel property of forward--backward cosine]
\label{prop:fb-kernel}
Assume first that all one-sided Green profiles used in the normalization are
nonzero, so that
\[
\|\widehat x_u^+\|_2=\|\widehat x_u^-\|_2=1
\qquad
\text{for all }u\in V.
\]
Then, for every \(\lambda\in[0,1]\), the forward--backward coordinate
\[
z_u^{(\lambda)}
=
\left(
\sqrt{\lambda}\,\widehat x_u^{+},
\sqrt{1-\lambda}\,\widehat x_u^{-}
\right)
\]
has unit norm, and the matrix
\[
K_{\mathrm{FB}}
=
\left(\operatorname{cos}_{\mathrm{FB}}(u,v)\right)_{u,v\in V}
\]
is symmetric positive semidefinite. Moreover,
\[
\operatorname{cos}_{\mathrm{FB}}(u,v)
=
\lambda \cos^{+}(u,v)
+
(1-\lambda)\cos^{-}(u,v).
\]
Consequently,
\[
d_{\mathrm{FB}}(u,v)
=
\sqrt{2-2\operatorname{cos}_{\mathrm{FB}}(u,v)}
\]
is the Euclidean distance between the normalized forward--backward coordinates
\(z_u^{(\lambda)}\) and \(z_v^{(\lambda)}\).

In numerical implementations with the tolerance-based normalization in
\eqref{eq:fb-green-coordinate}, the positive semidefinite kernel property
still holds, because the resulting similarity matrix is a Gram matrix of the
computed coordinates. However, the exact convex-combination identity and the
Euclidean distance formula above are guaranteed only in the nondegenerate
unit-normalized case. This is the case covered by the theoretical statement;
the tolerance-based normalization is used only as a numerical safeguard for
nearly zero profiles.
\end{proposition}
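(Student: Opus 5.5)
The plan is to verify each claim directly from the definition of \(z_u^{(\lambda)}\) as a concatenation of two scaled unit vectors. Every assertion then reduces to elementary inner-product identities in \(\mathbb R^{2n}\). The order is: unit norm, then the convex-combination identity, then positive semidefiniteness via a Gram representation, then the distance formula, and finally the numerical remark.

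\emph{Unit norm and convex combination.} For the block vector \(z_u^{(\lambda)}=(\sqrt{\lambda}\,\widehat x_u^{+},\sqrt{1-\lambda}\,\widehat x_u^{-})\), the squared norm splits over the blocks:
\[
\|z_u^{(\lambda)}\|_2^2=\lambda\|\widehat x_u^{+}\|_2^2+(1-\lambda)\|\widehat x_u^{-}\|_2^2=\lambda+(1-\lambda)=1.
\]
This uses the nondegeneracy assumption \(\|\widehat x_u^{\pm}\|_2=1\). The same block splitting applied to \(\langle z_u^{(\lambda)},z_v^{(\lambda)}\rangle\) gives \(\lambda\cos^{+}(u,v)+(1-\lambda)\cos^{-}(u,v)\), which is exactly \eqref{eq:fb-green-inner-product}.

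\emph{Symmetry and positive semidefiniteness.} Stack the coordinates as rows of a matrix \(Z\in\mathbb R^{n\times 2n}\). Then \(K_{\mathrm{FB}}=ZZ^\top\), which is symmetric. For any \(c\in\mathbb R^n\) we have \(c^\top K_{\mathrm{FB}}c=\|Z^\top c\|_2^2\ge 0\). An alternative argument is to write \(K_{\mathrm{FB}}=\lambda K^{+}+(1-\lambda)K^{-}\), where \(K^{\pm}\) are the Gram matrices of \(\widehat x_u^{\pm}\). A nonnegative combination of positive semidefinite matrices is positive semidefinite. Either route works, and I would present the Gram-matrix one because it also covers the numerical case.

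\emph{Distance formula.} Expand
\[
\|z_u^{(\lambda)}-z_v^{(\lambda)}\|_2^2=\|z_u^{(\lambda)}\|_2^2+\|z_v^{(\lambda)}\|_2^2-2\langle z_u^{(\lambda)},z_v^{(\lambda)}\rangle=2-2\cos_{\mathrm{FB}}(u,v).
\]
Here the unit norms from the first step are used. Cauchy--Schwarz gives \(\cos_{\mathrm{FB}}\le 1\), so the square root is real and nonnegative, and \(d_{\mathrm{FB}}\) is the Euclidean distance.

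\emph{Numerical remark.} With the tolerance-based normalization \eqref{eq:fb-green-coordinate}, the similarity matrix is still \(ZZ^\top\) for the computed rows \(z_u\). Hence symmetry and positive semidefiniteness hold without change. However, \(\|z_u\|_2\) may be strictly less than \(1\) when a profile norm falls below \(\tau\). In that case the cross terms no longer combine into the stated convex combination, and the norm terms in the expansion are not equal to \(1\). The remark therefore only needs to point out which step used unit norms.

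There is no genuine obstacle in this proof. The only point needing care is to track exactly where the nondegeneracy assumption enters, namely the unit-norm step, which the convex-combination identity and the distance formula both depend on. This keeps the boundary between the exact statements and the numerical safeguard explicit.
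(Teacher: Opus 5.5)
Your proposal is correct and is essentially the paper's proof: the Gram representation $K_{\mathrm{FB}}=Z_{\mathrm{FB}}Z_{\mathrm{FB}}^\top$ gives symmetry and positive semidefiniteness, block splitting of the inner product gives the convex combination, and the unit norms give $\|z_u^{(\lambda)}-z_v^{(\lambda)}\|_2^2=2-2\cos_{\mathrm{FB}}(u,v)$. One small correction to your numerical remark: the final normalization typically restores $\|z_u^{(\lambda)}\|_2=1$, failing only if $\|\widetilde z_u^{(\lambda)}\|_2<\tau$. So in the degenerate case the convex-combination identity breaks because of the rescaling by $1/\max\{\|\widetilde z_u^{(\lambda)}\|_2,\tau\}$ and the non-unit one-sided vectors $\widehat x_u^{\pm}$. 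By contrast, $d_{\mathrm{FB}}(u,v)$ remains the Euclidean distance unless $\|\widetilde z_u^{(\lambda)}\|_2<\tau$ or $\|\widetilde z_v^{(\lambda)}\|_2<\tau$.
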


\begin{proof}
Let \(Z_{\mathrm{FB}}\) be the matrix whose \(u\)-th row is
\(z_u^{(\lambda)}\). Then
\[
K_{\mathrm{FB}}=Z_{\mathrm{FB}}Z_{\mathrm{FB}}^\top.
\]
Thus \(K_{\mathrm{FB}}\) is symmetric positive semidefinite. The convex
combination formula follows from the definition of
\(z_u^{(\lambda)}\). Since \(\|z_u^{(\lambda)}\|_2=1\), we also have
\[
\|z_u^{(\lambda)}-z_v^{(\lambda)}\|_2^2
=
2-2\langle z_u^{(\lambda)},z_v^{(\lambda)}\rangle.
\]
\end{proof}

We next give a simple population-level separation statement. It says that in
an ideal directed block model, where vertices in the same planted community
have identical aggregate transition behavior, the proposed Green coordinates
are constant within planted communities.

Let
\[
\mathcal C^\star=\{C_1^\star,\ldots,C_K^\star\}
\]
be a planted disjoint partition of \(V\). A transition matrix \(P\) is called
block-constant with respect to \(\mathcal C^\star\) if, for every
\(u\in C_a^\star\) and \(v\in C_b^\star\),
\[
P_{uv}=\frac{B_{ab}}{|C_b^\star|}
\]
for some row-stochastic \(K\times K\) matrix \(B=(B_{ab})\).

\begin{theorem}[Population block separation for diffusive Green coordinates]
\label{thm:population-block-separation}
Assume that \(P_\alpha^+\) and \(P_\alpha^-\) are block-constant with respect
to the same planted partition
\[
\mathcal C^\star=\{C_1^\star,\ldots,C_K^\star\}.
\]
Assume also that the one-sided diffusive Green profiles
\[
\mathcal G_T^{\mathrm{diff}}(P_\alpha^+)_{u\bullet},
\qquad
\mathcal G_T^{\mathrm{diff}}(P_\alpha^-)_{u\bullet}
\]
used in the normalization are nonzero. Then the forward--backward diffusive
Green coordinate \(z_u^{(\lambda)}\) is constant inside each planted
community. Hence, if \(u,v\in C_a^\star\), then
\[
z_u^{(\lambda)}=z_v^{(\lambda)}
\qquad\text{and}\qquad
\operatorname{cos}_{\mathrm{FB}}(u,v)=1.
\]
If the \(K\) community-level forward--backward diffusive Green coordinates are
pairwise distinct, then
\[
\operatorname{cos}_{\mathrm{FB}}(u,v)<1
\]
whenever \(u\in C_a^\star\), \(v\in C_b^\star\), and \(a\neq b\).
\end{theorem}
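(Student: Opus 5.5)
The plan is to show that every power \(P^t\) of a block-constant matrix is block-constant, so that rows of \(P^t\) indexed by vertices in the same planted community coincide. We introduce the \(n\times K\) membership matrix \(M\), with \(M_{ua}=\mathbf 1\{u\in C_a^\star\}\), and the \(K\times n\) matrix \(R=D^{-1}M^\top\), where \(D=\diag(|C_1^\star|,\ldots,|C_K^\star|)\). Then block-constancy reads \(P=MBR\). Since \(RM=D^{-1}M^\top M=I_K\), induction gives
\[
P^t=MB^tR \qquad (t\ge 1).
\]
Consequently \(P^t_{u\bullet}=(B^t)_{a\bullet}R\) depends only on the block index \(a\) of \(u\). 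The stationary background \(\mathbf 1\pi^\top\) has identical rows by construction. Hence every row of
\[
\mathcal G_T^{\mathrm{diff}}(P)=\sum_{t=1}^T(P^t-\mathbf 1\pi^\top)
\]
indexed by a vertex of \(C_a^\star\) equals a common vector \(g_a\). It matters here that the sum starts at \(t=1\): the \(t=0\) term \(I\) is not block-constant, so the argument would fail for the full Green row. This is the one point needing care, and the diffusive truncation is exactly what avoids it.

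We apply this separately to \(P_\alpha^+\) and \(P_\alpha^-\), which are block-constant with respect to the same partition, possibly with different matrices \(B^\pm\). This gives \(x_u^\pm=g_a^\pm\) for all \(u\in C_a^\star\). By the nonvanishing assumption, normalization is well defined, so \(\widehat x_u^\pm=g_a^\pm/\|g_a^\pm\|_2\) is constant on \(C_a^\star\). Therefore \(z_u^{(\lambda)}=\zeta_a\), a fixed vector depending only on \(a\). By Proposition~\ref{prop:fb-kernel}, \(\zeta_a\) has unit norm, so
\[
\operatorname{cos}_{\mathrm{FB}}(u,v)=\|\zeta_a\|_2^2=1 \qquad (u,v\in C_a^\star).
\]

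For the separation claim, take \(u\in C_a^\star\) and \(v\in C_b^\star\) with \(a\ne b\). Then \(z_u^{(\lambda)}=\zeta_a\ne\zeta_b=z_v^{(\lambda)}\) by assumption, and both vectors are unit vectors. Using the distance identity from Proposition~\ref{prop:fb-kernel},
\[
2-2\operatorname{cos}_{\mathrm{FB}}(u,v)=\|\zeta_a-\zeta_b\|_2^2>0,
\]
so \(\operatorname{cos}_{\mathrm{FB}}(u,v)<1\). Equivalently, equality in the Cauchy--Schwarz inequality for unit vectors forces \(\zeta_a=\zeta_b\), which is excluded. The only nontrivial step in the whole argument is the factorization \(P^t=MB^tR\) together with the observation that the self-spike term is excluded; the rest is routine.
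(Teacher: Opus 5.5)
Your proof is correct and follows essentially the paper's route: $P^t$ is block-constant for $t\ge 1$, so the rows of $\mathcal G_T^{\mathrm{diff}}$ coincide within each planted block for both walks, and unit-norm coordinates then give $\operatorname{cos}_{\mathrm{FB}}=1$ inside blocks and $\operatorname{cos}_{\mathrm{FB}}<1$ between blocks with distinct coordinates. Your factorization $P^t=MB^tR$ makes the paper's ``by induction'' step explicit, and your remark that $\mathbf 1\pi^\top$ has identical rows by construction shows that the paper's extra step (uniformity of $\pi$ inside each block) is not needed for the conclusion.
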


\begin{proof}
We prove the statement for a generic transition matrix \(P\) that is
block-constant with respect to \(\mathcal C^\star\). By definition, for every
\(u\in C_a^\star\) and \(v\in C_b^\star\),
\[
P_{uv}=\frac{B_{ab}}{|C_b^\star|}
\]
for some row-stochastic block transition matrix \(B=(B_{ab})\). Therefore all
vertices in the same planted community have identical transition
probabilities to every target block, with the probability mass distributed
uniformly inside each target block.

It follows by induction that, for every \(t\geq 1\), the matrix \(P^t\) is
also block-constant with respect to \(\mathcal C^\star\). Moreover, the
stationary distribution is uniform inside each block after conditioning on
that block: if
\[
\bar\pi_b=\sum_{v\in C_b^\star}\pi_v,
\]
then
\[
\pi_v=\frac{\bar\pi_b}{|C_b^\star|}
\qquad
\text{for every }v\in C_b^\star.
\]
Hence the stationary matrix \(\Pi=\mathbf 1\pi^\top\) is also block-constant.

Consequently, for every \(t\geq 1\), the rows of \(P^t-\Pi\) are identical
for all vertices belonging to the same planted community. Summing from
\(t=1\) to \(T\), we obtain that
\[
\mathcal G_T^{\mathrm{diff}}(P)_{u\bullet}
=
\mathcal G_T^{\mathrm{diff}}(P)_{v\bullet}
\]
whenever \(u,v\in C_a^\star\).

Applying this argument to both \(P_\alpha^+\) and \(P_\alpha^-\), we get
\[
x_u^+=x_v^+,
\qquad
x_u^-=x_v^-,
\]
for all \(u,v\in C_a^\star\). Since the profiles are assumed to be nonzero,
normalization preserves equality, and hence
\[
z_u^{(\lambda)}=z_v^{(\lambda)}.
\]
Therefore the within-community forward--backward cosine equals \(1\).

Finally, if two community-level normalized coordinates are pairwise distinct,
then their Euclidean distance is positive. Since all coordinates have unit
norm, their inner product must be strictly smaller than \(1\). This proves the
between-community statement.
\end{proof}

\begin{remark}
The block-constant assumption is an ideal population condition. It implies
that vertices in the same planted community have identical aggregate
transition behavior and that transition mass is distributed uniformly inside
each target block. This assumption is not intended to cover degree-corrected
block models directly, where vertices in the same community may have different
in-degree and out-degree propensities. The theorem should therefore be read as
a population separation sanity check for the Green coordinate, rather than as
a full statistical recovery theorem for all benchmark models considered in
the experiments.
\end{remark}

The preceding theorem is a population statement. The next two results give
deterministic margin conditions for the idealized assignment rules underlying
the two algorithmic stages. They should not be read as global convergence
guarantees for the nonconvex spherical \(K\)-means optimization. Rather, they
state that, once the appropriate community representatives and thresholds are
fixed, the corresponding cosine assignment rules are exact under a positive
margin.

Let \(z_u=z_u^{(\lambda)}\). For a disjoint partition
\[
\mathcal C^\star=\{C_1^\star,\ldots,C_K^\star\},
\]
assume that
\[
\sum_{u\in C_a^\star}z_u\neq 0
\qquad
\text{for every }a=1,\ldots,K.
\]
Define the normalized community centroid
\[
m_a
=
\frac{\sum_{u\in C_a^\star}z_u}
{\left\|\sum_{u\in C_a^\star}z_u\right\|_2}.
\]

\begin{proposition}[Nearest-centroid recovery under a cosine margin]
\label{prop:cosine-margin-disjoint}
Assume that there exists \(\gamma>0\) such that, for every
\(u\in C_a^\star\) and every \(b\neq a\),
\[
\langle z_u,m_a\rangle
\geq
\langle z_u,m_b\rangle+\gamma.
\]
Then the ideal nearest-centroid cosine rule
\[
\widehat c(u)
=
\argmax_{1\leq b\leq K}\langle z_u,m_b\rangle
\]
recovers the planted label of every vertex.
\end{proposition}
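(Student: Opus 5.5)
The argument is a direct unpacking of the argmax rule under the margin hypothesis, so the plan is short. Fix an arbitrary vertex \(u\) and let \(a\) be its planted label, so \(u\in C_a^\star\). The centroids \(m_1,\ldots,m_K\) are well defined because of the standing assumption \(\sum_{v\in C_b^\star}z_v\neq 0\) for every \(b\). Hence each score \(s_b(u)=\langle z_u,m_b\rangle\) is a finite real number, and the argmax ranges over the finite set \(\{1,\ldots,K\}\).

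The key step is to compare \(s_a(u)\) with every competitor. For each \(b\neq a\), the margin hypothesis gives
\[
s_a(u)\;\geq\; s_b(u)+\gamma\;>\;s_b(u),
\]
using \(\gamma>0\). So \(s_a(u)\) strictly exceeds every other score. This means the maximum over \(b\) is attained, and it is attained only at \(b=a\).

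Two consequences follow from the strict inequality. First, the argmax set is the singleton \(\{a\}\), so \(\widehat c(u)=a\). Second, no tie-breaking convention is needed, and the conclusion does not depend on how ties would be resolved. Since \(u\) was arbitrary, \(\widehat c(u)\) equals the planted label for every vertex, which is exact recovery of \(\mathcal C^\star\).

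There is no genuine obstacle here. The only points requiring care are the two just mentioned: the centroids must be well defined (guaranteed by the nonzero-sum assumption), and the uniqueness of the argmax must come from \(\gamma>0\) rather than from a mere non-strict inequality. Optionally, I would add a remark that the argument never uses the normalization of \(m_b\) or of \(z_u\) beyond the definition of the scores. The result is therefore a purely combinatorial consequence of the margin, and the substantive difficulty lies in verifying the margin condition itself, not in this step.
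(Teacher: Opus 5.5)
Your proposal is correct and follows the paper's proof: for \(u\in C_a^\star\), the margin with \(\gamma>0\) gives \(\langle z_u,m_a\rangle>\langle z_u,m_b\rangle\) for every \(b\neq a\), so the argmax is uniquely \(a\). Your extra remarks on well-definedness of the centroids and the irrelevance of tie-breaking are harmless refinements of the same argument.
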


\begin{proof}
For \(u\in C_a^\star\), the assumed margin implies
\[
\langle z_u,m_a\rangle>
\langle z_u,m_b\rangle
\]
for every \(b\neq a\). Hence the maximizer is uniquely \(a\).
\end{proof}

\begin{remark}
This proposition is an assignment-level guarantee. It assumes the
community representatives \(m_a\) associated with the planted partition are
already fixed. It does not claim that a particular run of the nonconvex
spherical \(K\)-means algorithm necessarily finds these representatives from
arbitrary initialization.
\end{remark}

The following result is stated for the overlap rule that will be introduced in
\cref{sec:green-fb-overlap}. Here \(s_j(u)\) denotes the vertex--community
cosine score between vertex \(u\) and the candidate community \(C_j\), and
\(\theta_j\) denotes the community-adaptive threshold used by that rule.

For this idealized statement, assume that the candidate communities
\(C_1,\ldots,C_K\) have been indexed consistently with the \(K\) true
community labels. Let
\[
    M^\star(u)\subseteq \{1,\ldots,K\}
\]
denote the true membership-label set of vertex \(u\). Assume that the initial
disjoint label \(b(u)\) satisfies
\[
    b(u)\in M^\star(u),
\]
that is, the initial partition assigns each vertex to one of its true
communities.

\begin{theorem}[Exact overlap recovery under a margin condition]
\label{thm:overlap-margin-recovery}
Assume that there exists \(\gamma>0\) such that, for every vertex \(u\) and
every community index \(j\in\{1,\ldots,K\}\),
\[
s_j(u)\geq \theta_j+\gamma
\qquad
\text{if } j\in M^\star(u),
\]
and
\[
s_j(u)\leq \theta_j-\gamma
\qquad
\text{if } j\notin M^\star(u).
\]
Assume also that \(b(u)\in M^\star(u)\) for every vertex \(u\). Then the
Di-Green-FB-Cosine Overlap algorithm exactly recovers the true membership
labels:
\[
\widehat M(u)=M^\star(u)
\qquad
\text{for all }u\in V,
\]
where \(\widehat M(u)\) denotes the set of community indices returned by the
algorithm.
\end{theorem}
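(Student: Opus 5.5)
The argument is a direct case analysis on the output rule of Di-Green-FB-Cosine Overlap. The algorithm is only introduced in \cref{sec:green-fb-overlap}, so I will use the structure it will have there. Each vertex \(u\) keeps its initial disjoint label \(b(u)\), and for each candidate community \(j\neq b(u)\) it adds \(j\) exactly when the vertex--community score clears the adaptive threshold, i.e. \(s_j(u)\geq\theta_j\). In formulas,
\[
\widehat M(u)=\{b(u)\}\cup\{\,j\neq b(u): s_j(u)\geq \theta_j\,\}.
\]
The first step is to make this characterization precise. The scores \(s_j(u)\) and thresholds \(\theta_j\) are computed once from the initial partition and are treated as fixed in the hypothesis. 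Whether the comparison is strict or non-strict is irrelevant, because the margin \(\gamma>0\) keeps every score at least \(\gamma\) away from its threshold.

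With this description, I fix \(u\) and prove the two inclusions.

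For \(\widehat M(u)\subseteq M^\star(u)\): the kept label satisfies \(b(u)\in M^\star(u)\) by assumption. Any added label \(j\) satisfies \(s_j(u)\geq\theta_j>\theta_j-\gamma\). By the second margin inequality, \(j\notin M^\star(u)\) would force \(s_j(u)\leq\theta_j-\gamma\), which is a contradiction. Hence \(j\in M^\star(u)\).

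For \(M^\star(u)\subseteq\widehat M(u)\): take \(j\in M^\star(u)\). If \(j=b(u)\), then \(j\) is retained automatically. Otherwise the first margin inequality gives \(s_j(u)\geq\theta_j+\gamma>\theta_j\), so the expansion rule adds \(j\). Since \(u\) was arbitrary, \(\widehat M(u)=M^\star(u)\) for all \(u\in V\).

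The only real obstacle is the first step: checking that the algorithm as later specified really has this form. The points to verify are that it makes no cap on the number of added memberships, applies no per-vertex top-\(k\) or secondary filtering, and does not recompute thresholds iteratively as memberships are added. If the algorithm caps the number of extra memberships or ranks candidates, I would add to the hypothesis that the cap is at least \(|M^\star(u)|-1\), or observe that ranking does not matter when every true label clears its threshold and every false one fails. If thresholds are updated after expansion, I would read the margin hypothesis as referring to the scores and thresholds actually used in the final assignment, in keeping with the idealized fixed-threshold reading stated before the theorem.
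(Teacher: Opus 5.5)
Your proof is correct and is essentially the paper's own argument. The rule in \eqref{eq:overlap-assignment} (Algorithm~\ref{alg:green-fb-overlap-compact}) has the form $\{b(u)\}\cup\{j\neq b(u): s_j(u)\ge\theta_j\}$, with thresholds computed once beforehand and no cap or re-ranking, so the two margin inequalities settle every accept/reject decision exactly as in your two inclusions, and your contingency discussion is not needed.
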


\begin{proof}
The algorithm always includes the initial community index \(b(u)\), which is a
true membership by assumption. If \(j\in M^\star(u)\), then
\[
s_j(u)\geq \theta_j+\gamma>\theta_j,
\]
so the algorithm adds community \(j\). If \(j\notin M^\star(u)\), then
\[
s_j(u)\leq \theta_j-\gamma<\theta_j,
\]
so the algorithm does not add community \(j\). Therefore every true
membership is added and no false membership is added.
\end{proof}

\begin{corollary}[Stability of overlap recovery under score and threshold perturbations]
\label{cor:overlap-stability}
Suppose the margin condition in
\cref{thm:overlap-margin-recovery} holds with margin \(\gamma>0\) for the
exact Green cosine scores \(s_j(u)\) and thresholds \(\theta_j\). Let
\(\widetilde s_j(u)\) and \(\widetilde\theta_j\) be the scores and thresholds
computed from approximate or truncated Green coordinates. If
\[
    |\widetilde s_j(u)-s_j(u)|
    +
    |\widetilde\theta_j-\theta_j|
    <\gamma
\]
for all vertices \(u\) and all community indices \(j\), then the overlap
assignment obtained from the approximate scores and thresholds is identical
to the assignment obtained from the exact scores and thresholds.
\end{corollary}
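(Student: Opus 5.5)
The plan is to reduce the corollary to \cref{thm:overlap-margin-recovery}, or more precisely to its proof. That proof only used the strict sign of the comparison \(s_j(u)-\theta_j\) for each pair \((u,j)\), together with the fact that the initial label \(b(u)\) is always kept. So it suffices to show that, under the perturbation hypothesis, the approximate difference \(\widetilde s_j(u)-\widetilde\theta_j\) has the same strict sign as the exact difference \(s_j(u)-\theta_j\) for every vertex \(u\) and every index \(j\). The initial disjoint label \(b(u)\) is the same in both runs, since it comes from the same given partition, so the forced inclusion of \(b(u)\) is unaffected.

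Fix \(u\) and \(j\), and write \(\delta=|\widetilde s_j(u)-s_j(u)|+|\widetilde\theta_j-\theta_j|<\gamma\). By the triangle inequality,
\[
\bigl|(\widetilde s_j(u)-\widetilde\theta_j)-(s_j(u)-\theta_j)\bigr|\le\delta.
\]
There are two cases.
\begin{itemize}
\item If \(j\in M^\star(u)\), the margin gives \(s_j(u)-\theta_j\ge\gamma\). Hence \(\widetilde s_j(u)-\widetilde\theta_j\ge\gamma-\delta>0\), and the approximate rule adds \(j\), exactly as the exact rule does.
\item If \(j\notin M^\star(u)\), then \(s_j(u)-\theta_j\le-\gamma\). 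Hence \(\widetilde s_j(u)-\widetilde\theta_j\le-\gamma+\delta<0\), and the approximate rule rejects \(j\), as the exact rule does.
\end{itemize}
In both cases the decisions for \((u,j)\) agree. Since this holds for all pairs, the two returned membership sets coincide. By \cref{thm:overlap-margin-recovery} both therefore equal \(M^\star(u)\).

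The only delicate point is interpretive rather than computational. The overlap rule introduced in \cref{sec:green-fb-overlap} must be a pure per-pair threshold test of the form "add \(j\) iff \(s_j(u)>\theta_j\)", plus retention of \(b(u)\), exactly as it is used in the proof of the theorem. If equality \(s_j(u)=\theta_j\) were treated as inclusion, nothing changes, because the margin keeps both the exact and the perturbed differences bounded away from zero. I would state this reliance on the form of the rule explicitly, and note that the strict inequality \(\delta<\gamma\) is exactly what is needed to exclude ties.
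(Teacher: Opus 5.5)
Your proposal is correct and follows essentially the same argument as the paper: the triangle inequality keeps $\widetilde s_j(u)-\widetilde\theta_j$ on the same strict side of zero as $s_j(u)-\theta_j$ in both the $j\in M^\star(u)$ and $j\notin M^\star(u)$ cases, so every accept/reject decision is unchanged. Your closing appeal to \cref{thm:overlap-margin-recovery}, concluding that both covers equal $M^\star(u)$, goes beyond what the corollary asks and additionally uses $b(u)\in M^\star(u)$; note also that your symbol $\delta$ clashes with the paper's relaxation margin in the threshold definition.
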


\begin{proof}
If \(j\in M^\star(u)\), then the margin condition gives
\[
s_j(u)-\theta_j\geq \gamma.
\]
Therefore
\[
\widetilde s_j(u)-\widetilde\theta_j
\geq
s_j(u)-\theta_j
-
|\widetilde s_j(u)-s_j(u)|
-
|\widetilde\theta_j-\theta_j|
>0.
\]
Thus the approximate rule accepts community \(j\).

Similarly, if \(j\notin M^\star(u)\), then
\[
s_j(u)-\theta_j\leq -\gamma.
\]
Hence
\[
\widetilde s_j(u)-\widetilde\theta_j
\leq
s_j(u)-\theta_j
+
|\widetilde s_j(u)-s_j(u)|
+
|\widetilde\theta_j-\theta_j|
<0.
\]
Thus the approximate rule rejects community \(j\). Consequently every
accept/reject decision is unchanged.
\end{proof}

\section{Disjoint Community Detection via Forward--Backward Green Coordinates}
\label{sec:green-fb-disjoint}

The first stage of the proposed framework constructs a non-overlapping
partition. Although any disjoint community detection method could be used as
the initializer, the main pipeline in this paper uses
Di-Green-FB-cosine-KMeans, which clusters the forward--backward Green
coordinates directly. This keeps the disjoint and overlap stages within the
same geometric representation.

Given the forward and backward Green coordinates \(x_u^{+}\) and \(x_u^{-}\),
we first normalize them as in \eqref{eq:green-normalization}, and then form
the preliminary combined coordinate
\[
\widetilde z_u^{(\lambda)}
=
\left(
\sqrt{\lambda}\,\widehat x_u^{+},
\sqrt{1-\lambda}\,\widehat x_u^{-}
\right),
\qquad 0\leq \lambda\leq 1.
\]
In the nondegenerate case, \(\|\widetilde z_u^{(\lambda)}\|_2=1\), and we set
\(z_u^{(\lambda)}=\widetilde z_u^{(\lambda)}\). In numerical implementation,
we use the tolerance-based final normalization described in
\eqref{eq:fb-green-coordinate}. Spherical \(K\)-means is then applied to the
resulting coordinates \(z_u^{(\lambda)}\). The resulting clusters form the
initial disjoint partition.

The complete procedure is summarized in
Algorithm~\ref{alg:green-fb-kmeans}.

\begin{algorithm}[H]
\small
\DontPrintSemicolon
\SetAlgoLined
\SetKwInOut{KwInput}{Input}
\SetKwInOut{KwOutput}{Output}

\caption{Di-Green-FB-cosine-KMeans}
\label{alg:green-fb-kmeans}

\KwInput{
\begin{minipage}[t]{0.82\linewidth}
Directed weighted graph \(G=(V,E,A)\); number of communities \(K\);
teleportation parameter \(\alpha\); truncation length \(T\);
forward--backward weight \(\lambda\in[0,1]\); normalization tolerance
\(\tau>0\).
\end{minipage}
}

\KwOutput{
\begin{minipage}[t]{0.82\linewidth}
A disjoint partition \(\mathcal P=\{C_1,\ldots,C_K\}\).
\end{minipage}
}

\BlankLine

Construct the forward teleported transition matrix \(P_\alpha^{+}\) from
\(A\)\;

Construct the backward teleported transition matrix \(P_\alpha^{-}\) from
\(A^\top\)\;

Compute stationary distributions \(\pi^{+}\) and \(\pi^{-}\)\;

Compute diffusive truncated Green matrices
\[
\mathcal G_T^{\mathrm{diff}}(P_\alpha^{+})
=
\sum_{t=1}^{T}
\left((P_\alpha^{+})^t-\mathbf 1(\pi^{+})^\top\right)
\]
and
\[
\mathcal G_T^{\mathrm{diff}}(P_\alpha^{-})
=
\sum_{t=1}^{T}
\left((P_\alpha^{-})^t-\mathbf 1(\pi^{-})^\top\right).
\]

\ForEach{\(u\in V\)}{
Set
\(x_u^{+}\leftarrow \mathcal G_T^{\mathrm{diff}}(P_\alpha^{+})_{u\bullet}\)\;

Set
\(x_u^{-}\leftarrow \mathcal G_T^{\mathrm{diff}}(P_\alpha^{-})_{u\bullet}\)\;

    Normalize, with a small tolerance \(\tau>0\),
    \[
    \widehat x_u^{+}
    \leftarrow
    \frac{x_u^{+}}{\max\{\|x_u^{+}\|_2,\tau\}},
    \qquad
    \widehat x_u^{-}
    \leftarrow
    \frac{x_u^{-}}{\max\{\|x_u^{-}\|_2,\tau\}}.
    \]

    Form the preliminary coordinate
    \[
    \widetilde z_u^{(\lambda)}
    \leftarrow
    \left(
    \sqrt{\lambda}\,\widehat x_u^{+},
    \sqrt{1-\lambda}\,\widehat x_u^{-}
    \right).
    \]

    Apply a final normalization
    \[
    z_u^{(\lambda)}
    \leftarrow
    \frac{\widetilde z_u^{(\lambda)}}
    {\max\{\|\widetilde z_u^{(\lambda)}\|_2,\tau\}}.
    \]
}

Apply spherical \(K\)-means to the normalized coordinates
\(\{z_u^{(\lambda)}:u\in V\}\)\;

\Return{the resulting partition \(\mathcal P=\{C_1,\ldots,C_K\}\)}\;

\end{algorithm}

\section{Overlap Detection via Green Forward--Backward Cosine Similarity}
\label{sec:green-fb-overlap}

This section describes the second stage of the framework: expanding an initial
disjoint partition
\[
\mathcal P=\{C_1,C_2,\ldots,C_K\}
\]
into an overlapping cover. The partition may be obtained by any disjoint
community detection algorithm; in the main pipeline, it is produced by
Di-Green-FB-cosine-KMeans. Given the vertex coordinates
\(z_u^{(\lambda)}\) and the forward--backward cosine
\(\operatorname{cos}_{\mathrm{FB}}(u,v)\) defined in
Section~\ref{sec:green-geometry}, the goal is to identify vertices whose
diffusion profiles are sufficiently aligned with more than one community.

A second important feature of our method is that the overlap criterion is
community-adaptive. Rather than using a single global cosine threshold for all
communities, we estimate a separate threshold for each community. This makes
the overlap rule more flexible for communities with different sizes, densities,
and internal coherence.

\subsection{Community-wise cosine thresholds}

For each community \(C_j\), we compute its internal forward--backward cosine
distribution:
\begin{equation}
\mathcal S_j
=
\left\{
\operatorname{cos}_{\mathrm{FB}}(u,v)
:
u,v\in C_j,\ u<v
\right\}.
\label{eq:internal-cosine-distribution}
\end{equation}
If \(|C_j|<2\), then \(\mathcal S_j\) is empty. In that case, we set
\[
\theta_j=\theta_{\min}.
\]
Thus the quantile-based threshold is used only for communities containing at
least two vertices.

For each community, the adaptive threshold is defined by
\begin{equation}
\theta_j
=
\max
\left\{
\theta_{\min},
Q_q(\mathcal S_j)-\delta+\varepsilon
\right\}.
\label{eq:community-threshold}
\end{equation}
Here \(Q_q(\mathcal S_j)\) denotes the \(q\)-quantile of the internal cosine
values in \(C_j\), \(\delta\ge 0\) is a relaxation margin, \(\varepsilon\) is
a small offset, and \(\theta_{\min}\) is a lower bound preventing the
threshold from becoming too permissive.

The use of \(\theta_j\) has an important advantage. Dense and coherent
communities typically have high internal cosine values and therefore receive
higher thresholds. Sparse or heterogeneous communities receive lower
thresholds. Hence the overlap rule adapts automatically to the local geometry
of each detected community.

\subsection{Cosine-based overlap assignment rule}

For a vertex \(u\in V\) and a community \(C_j\), we define the similarity
between \(u\) and \(C_j\) by averaging the largest cosine similarities between
\(u\) and vertices of \(C_j\). More precisely, let
\[
\ell_j
=
\max\{1,\lceil \eta |C_j|\rceil\},
\qquad 0<\eta\le 1,
\]
and let \(\operatorname{Top}_{\ell_j}(u,C_j)\) be the set of the \(\ell_j\)
vertices in \(C_j\) with the largest values of
\(\operatorname{cos}_{\mathrm{FB}}(u,v)\). We define
\begin{equation}
s_j(u)
=
\frac{1}{\ell_j}
\sum_{v\in \operatorname{Top}_{\ell_j}(u,C_j)}
\operatorname{cos}_{\mathrm{FB}}(u,v).
\label{eq:vertex-community-cosine}
\end{equation}

The parameter \(\eta\) controls the robustness of the vertex--community
similarity. When \(\eta\) is very small, the rule approaches a maximum-cosine
criterion. Larger values of \(\eta\) require the vertex to be similar to a
larger portion of the community and thus yield a more stable assignment.

Let \(b(u)\) denote the initial community label of \(u\), that is,
\(u\in C_{b(u)}\). The final overlapping membership set of \(u\) is defined by
\begin{equation}
\widehat{\mathcal C}(u)
=
\{C_{b(u)}\}
\cup
\left\{
C_j:
j\ne b(u),\
s_j(u)\ge \theta_j
\right\}.
\label{eq:overlap-assignment}
\end{equation}
Thus, the initial membership of each vertex is always preserved, and
additional memberships are added only when the similarity between the vertex
and another community exceeds that community's adaptive threshold.

Importantly, the acceptance condition in \eqref{eq:overlap-assignment} is
purely geometric: it depends only on the forward--backward Green cosine
between the vertex and the candidate community. No modularity gain, density
gain, or degree-based acceptance condition is imposed in the final overlap
assignment rule. This makes the refinement step directly interpretable as an
assignment in the proposed Green cosine geometry.

\subsection{The Di-Green-FB-Cosine Overlap Algorithm}

The complete procedure is summarized in
Algorithm~\ref{alg:green-fb-overlap-compact}.

\begin{algorithm}[H]
\small
\DontPrintSemicolon
\SetAlgoLined
\SetKwInOut{KwInput}{Input}
\SetKwInOut{KwOutput}{Output}

\caption{Di-Green-FB-Cosine Overlap Detection}
\label{alg:green-fb-overlap-compact}

\KwInput{
\begin{minipage}[t]{0.82\linewidth}
Initial disjoint partition \(\mathcal P=\{C_1,\ldots,C_K\}\), obtained by any
base disjoint community detection algorithm; vertex coordinates
\(z_u^{(\lambda)}\) for all \(u\in V\); parameters
\(q,\delta,\theta_{\min},\varepsilon,\eta\).
\end{minipage}
}

\KwOutput{
\begin{minipage}[t]{0.82\linewidth}
Overlapping assignment \(\widehat{\mathcal C}(u)\), \(u\in V\).
\end{minipage}
}

\BlankLine

Compute
\(\operatorname{cos}_{\mathrm{FB}}(u,v)
=
\langle z_u^{(\lambda)},z_v^{(\lambda)}\rangle\)
for all pairs \(u,v\in V\)\;

\ForEach{\(C_j\in\mathcal P\)}{
    \eIf{\(|C_j|<2\)}{
        Set \(\theta_j\leftarrow \theta_{\min}\)\;
    }{
        Compute the internal cosine set \(\mathcal S_j\)\;
        
        Set
        \[
        \theta_j
        \leftarrow
        \max\{\theta_{\min},Q_q(\mathcal S_j)-\delta+\varepsilon\}.
        \]
    }
}

\ForEach{\(u\in V\)}{
    Initialize \(\widehat{\mathcal C}(u)\leftarrow\{C_{b(u)}\}\)\;

    \ForEach{\(C_j\in\mathcal P\), \(j\neq b(u)\)}{
        Compute the top-\(\ell_j\) average cosine \(s_j(u)\)\;

        \If{\(s_j(u)\ge \theta_j\)}{
            Add \(C_j\) to \(\widehat{\mathcal C}(u)\)\;
        }
    }
}

\Return{\(\widehat{\mathcal C}(u)\) for all \(u\in V\)}\;

\end{algorithm}

\subsection{Computational complexity}
\label{subsec:complexity}

Let \(n=|V|\), let \(m_{\mathrm{nz}}\) be the number of nonzero directed edges,
let \(K\) be the number of initial communities, \(T\) be the Green truncation
length, and \(d\) be the dimension of the final forward--backward coordinate. In the full-coordinate implementation used in
this paper, each one-sided diffusive Green profile has dimension \(n\), and
therefore \(d=2n\). Storing the forward and backward Green coordinates
requires \(O(n^2)\) memory.

The cost of constructing full diffusive Green coordinates depends on how the
truncated sums
\[
\mathcal G_T^{\mathrm{diff}}(P)
=
\sum_{t=1}^{T}(P^t-\mathbf 1\pi^\top)
\]
are computed. Although the teleported matrix
\[
P_\alpha=\alpha\widetilde P+(1-\alpha)\frac{1}{n}\mathbf 1\mathbf 1^\top
\]
is dense because of the rank-one teleportation term, this term can be applied
separately. Hence multiplying a dense \(n\times n\) matrix by \(P_\alpha\) can
be implemented using one sparse multiplication by \(\widetilde P\) plus one
rank-one correction. This costs \(O(m_{\mathrm{nz}}n+n^2)\), which is \(O(m_{\mathrm{nz}}n)\) when
\(m_{\mathrm{nz}}\geq n\). Thus, computing the full forward and backward
coordinates costs
\[
O\bigl(T(m_{\mathrm{nz}}n+n^2)\bigr),
\]
or \(O(Tm_{\mathrm{nz}}n)\) in the usual sparse-graph regime with
\(m_{\mathrm{nz}}\geq n\).

For the disjoint clustering stage, an all-pairs cosine matrix is not required.
Spherical \(K\)-means can be applied directly to the \(n\) normalized
coordinates. If \(I\) denotes the number of \(K\)-means iterations, this step
costs
\[
O(I n K d).
\]
In the full-coordinate case \(d=2n\), this becomes \(O(IKn^2)\), which is
substantially smaller than explicitly forming all pairwise similarities when
\(K\ll n\).

The overlap expansion stage is more expensive in a direct implementation,
because it compares vertices with candidate communities using cosine
similarities. If all pairwise cosine similarities are precomputed, this step
costs
\[
O(n^2d)
\]
time and \(O(n^2)\) memory for the similarity matrix. With full
forward--backward Green coordinates, \(d=2n\), giving \(O(n^3)\) time for this
direct all-pairs implementation. Once the similarity matrix is available,
community-wise thresholds and top-\(\ell_j\) averages can be computed in at
most \(O(n^2\log n)\) time, or \(O(n^2)\) time using partial selection.

Therefore, the full implementation is best viewed as a proof-of-concept and
benchmark implementation for small and medium-size directed networks. Larger
graphs require reduced or approximate representations, such as truncated
singular-vector Green coordinates, landmark Green coordinates, random
projections, sketching, or approximate nearest-neighbor search in the Green
feature space. These approximations reduce the effective dimension from
\(d=2n\) to \(d\ll n\), and can also avoid storing the full \(n\times n\)
cosine matrix.

\section{Experiments}
\label{sec:experiments}

This section evaluates the proposed forward--backward Green cosine framework
in two tasks: disjoint directed community detection and overlapping directed
community detection. To avoid repetition, we first describe the common
experimental protocol, including benchmark models, evaluation metrics,
compared methods, and reporting conventions. Each subsequent experiment then
only specifies the particular graph regime being tested and reports the
corresponding numerical results.

Most experiments are conducted on synthetic directed benchmark networks with
known ground-truth communities. This controlled setting allows us to vary the
mixing parameter, graph size, number of communities, community-size
heterogeneity, degree heterogeneity, and overlap rate. We also include a
modularity-only experiment on real directed networks. Since reliable disjoint
or overlapping ground-truth communities are not available for all real
networks considered, this real-data experiment is used only as an internal
structural-quality evaluation, not as a ground-truth recovery test.

The synthetic experiments use three benchmark families. The first family is a
directed heterogeneous Gaussian partition model, which tests the method under
heterogeneous community sizes and heterogeneous block densities. The second is
a directed degree-corrected block model, which tests robustness to
heterogeneous in-degree and out-degree propensities. This model is a directed
version of the degree-corrected stochastic block model, introduced to separate
community structure from degree heterogeneity
\cite{KarrerNewman2011,QinRohe2013,WangLiangJi2020}. The third family is a
directed overlapping planted-partition model, which is used to evaluate the
overlap expansion stage.

The main tables focus on accuracy and structural quality. Running times are
not included in the main experimental tables, because they depend strongly on
implementation details and hardware. The computational cost of the full
Green-coordinate implementation is discussed separately in
Section~\ref{subsec:complexity}.

\subsection{Evaluation metrics}
\label{subsec:evaluation-metrics}

For disjoint community detection, we report normalized mutual information
(NMI) \cite{StrehlGhosh2002}, adjusted Rand index (ARI)
\cite{HubertArabie1985}, pairwise \(F_1\), and directed modularity
\(Q_{\mathrm{dir}}\) \cite{LeichtNewman2008}. NMI, ARI, and PairF1 measure
agreement with the planted partition, while \(Q_{\mathrm{dir}}\) evaluates the
internal structural quality of the detected directed communities.

For a directed graph with adjacency matrix \(A\), let
\[
m=\sum_{i,j}A_{ij},
\qquad
k_i^{\mathrm{out}}=\sum_j A_{ij},
\qquad
k_j^{\mathrm{in}}=\sum_i A_{ij}.
\]
For a disjoint partition with labels \(c_i\), we use the Leicht--Newman
directed modularity
\begin{equation}
Q_{\mathrm{dir}}
=
\frac{1}{m}
\sum_{i,j}
\left(
A_{ij}
-
\frac{k_i^{\mathrm{out}}k_j^{\mathrm{in}}}{m}
\right)
\mathbf 1\{c_i=c_j\}.
\label{eq:directed-modularity}
\end{equation}
Thus, unlike undirected modularity, the normalization factor is \(1/m\), not
\(1/(2m)\).

For overlapping community detection, we report overlapping normalized mutual
information (ONMI) \cite{McDaidGreeneHurley2011}, PairF1, and OverlapF1. ONMI compares two overlapping
covers, PairF1 measures whether pairs of vertices co-occur in at least one
community, and OverlapF1 evaluates the ability to identify vertices with more
than one membership. We also use the aggregate score
\[
    \mathrm{Score}
    =
    \frac{\mathrm{ONMI}+\mathrm{PairF1}+\mathrm{OverlapF1}}{3}
\]
as a compact summary of overlap performance.

\subsection{Synthetic directed benchmark models}
\label{subsec:synthetic-models}

We use three families of synthetic directed graphs.

\paragraph{Directed heterogeneous Gaussian partition graphs.}
For disjoint community detection, we use directed heterogeneous Gaussian
partition graphs. The parameters are the number of vertices \(N\), the number
of planted communities \(K\), the target average out-degree \(\bar d\), the
mixing parameter \(\mu\), and a community-size heterogeneity parameter. 

The community sizes are sampled from a Gaussian-like distribution and then
normalized so that their total is \(N\). More precisely, we first generate
positive community-size weights with the prescribed heterogeneity level,
rescale them so that their sum is \(N\), and then round them to integer sizes
while preserving the total number of vertices. Very small communities, if any,
are adjusted to satisfy the required minimum size.

Edges are directed and generated independently. Self-loops are not allowed,
and the two directed edges \(i\to j\) and \(j\to i\) are sampled
independently. For a vertex in community \(C_c\), the within-community edge
probability is
\[
    p_{\mathrm{in}}^{(c)}
    =
    \rho_c
    \frac{(1-\mu)\bar d}{|C_c|-1},
\]
where \(\rho_c\) is a community-specific internal density multiplier. The
inter-community edge probability is
\[
    p_{\mathrm{out}}^{(c)}
    =
    \frac{\mu\bar d}{N-|C_c|}.
\]
All probabilities are clipped to the interval \([0,1]\) if necessary. In the
parameter regimes used in our experiments, clipping is rare and the realized
average degree is reported by the graph generator.

Because of the multiplier \(\rho_c\), the parameter \(\mu\) should be
interpreted as a nominal mixing parameter rather than an exact realized
external-edge fraction. Indeed, before clipping, the expected internal
out-degree scale for a vertex in \(C_c\) is
\[
    \rho_c(1-\mu)\bar d,
\]
whereas the expected external out-degree scale is
\[
    \mu\bar d.
\]
Thus the corresponding expected external fraction for community \(C_c\) is
approximately
\[
    \frac{\mu}{\rho_c(1-\mu)+\mu},
\]
not exactly \(\mu\) unless \(\rho_c=1\). Nevertheless, increasing \(\mu\)
monotonically weakens the planted community separation and therefore remains
a meaningful way to control the difficulty of the benchmark.

Unless stated otherwise, the internal density multipliers are sampled from
\[
    \rho_c\in[0.65,1.55].
\]
Consequently, both the target average degree \(\bar d\) and the mixing
parameter \(\mu\) should be interpreted as nominal graph-generation
parameters rather than exact deterministic graph statistics.

\paragraph{Directed degree-corrected block graphs.}
For disjoint community detection, we also use directed degree-corrected block
graphs. This benchmark is designed to test robustness to heterogeneous
outgoing and incoming degrees. Degree heterogeneity is a major difficulty in
community detection, because vertices with large degrees may dominate spectral
or local-density-based representations even when the underlying community
structure is unchanged. The degree-corrected stochastic block model was
introduced to address this issue in undirected networks
\cite{KarrerNewman2011,QinRohe2013}, and directed degree-corrected block
models have been used as standard benchmarks for directed spectral methods
\cite{WangLiangJi2020}.

Each vertex \(i\) has a planted community label \(c_i\), an outgoing
degree-correction parameter \(\theta_i^{\mathrm{out}}\), and an incoming
degree-correction parameter \(\theta_i^{\mathrm{in}}\). Conditional on these
parameters, directed edges are generated independently according to
\[
    \mathbb P(A_{ij}=1)
    =
    \theta_i^{\mathrm{out}}
    \theta_j^{\mathrm{in}}
    B_{c_i c_j},
\]
where \(B=(B_{ab})\) is a block connectivity matrix. The parameters
\(\theta_i^{\mathrm{out}}\) and \(\theta_i^{\mathrm{in}}\) control the
propensity of vertex \(i\) to send and receive edges, respectively. In our implementation, the outgoing and incoming degree-correction parameters
are sampled independently and then normalized within each planted community:
\[
\sum_{i\in C_a}\theta_i^{\mathrm{out}}=|C_a|,
\qquad
\sum_{i\in C_a}\theta_i^{\mathrm{in}}=|C_a|.
\]
This normalization keeps the block matrix \(B\) responsible for the expected
mixing pattern between communities, while allowing substantial heterogeneity
among vertices inside the same community. Edge probabilities are clipped to
\([0,1]\) if necessary.

In the implementation used in the experiments, the block matrix is chosen as
\[
    B_{aa}
    =
    \frac{(1-\mu)\bar d}{|C_a|-1},
    \qquad
    B_{ab}
    =
    \frac{\mu\bar d}{N-|C_a|}
    \quad (a\neq b),
\]
before clipping the resulting edge probabilities to \([0,1]\). Thus, after
the within-community normalization of
\(\theta^{\mathrm{out}}\) and \(\theta^{\mathrm{in}}\), the expected
within-community and between-community out-degree scales are approximately
\((1-\mu)\bar d\) and \(\mu\bar d\), respectively. Larger \(\mu\) therefore
corresponds to weaker planted community separation. 
Compared with the Gaussian partition benchmark, this model introduces an
additional source of difficulty: vertices in the same community may have very
different outgoing and incoming degrees. This makes it a useful test of
whether the proposed forward--backward Green geometry is robust to directed
degree heterogeneity.

\paragraph{Directed overlapping planted-partition graphs.}
For overlapping community detection, each vertex has one primary community,
and a prescribed fraction of vertices receive one additional membership. The
parameters are the number of vertices \(N\), number of base communities \(K\),
target average out-degree \(\bar d\), mixing parameter \(\mu\), number of
overlapping vertices \(o_n\), and number of memberships per overlapping vertex
\(o_m\).

Let \(M(u)\) be the planted membership set of vertex \(u\). For each source
vertex \(u\), define
\[
    N_{\mathrm{in}}(u)
    =
    \left|\{v\neq u:\ M(u)\cap M(v)\neq\emptyset\}\right|,
\]
and
\[
    N_{\mathrm{out}}(u)
    =
    \left|\{v:\ M(u)\cap M(v)=\emptyset\}\right|.
\]
A directed edge \(u\to v\) is generated independently with probability
\[
p(u,v)
=
\begin{cases}
\dfrac{(1-\mu)\bar d}{N_{\mathrm{in}}(u)},
& M(u)\cap M(v)\neq\emptyset,\\[10pt]
\dfrac{\mu\bar d}{N_{\mathrm{out}}(u)},
& M(u)\cap M(v)=\emptyset.
\end{cases}
\]
Self-loops are not allowed, and probabilities are clipped to \([0,1]\) if
necessary. With this construction, the expected out-degree of each vertex is
approximately \(\bar d\), with an approximate fraction \(\mu\) of its outgoing
edges going to vertices with no shared planted membership. Unless stated
otherwise, we use
\[
    o_n=0.15N,
    \qquad
    o_m=2.
\]

\subsection{Compared Algorithms}
\label{subsec:compared-algorithms}

\paragraph{Methods for disjoint community detection.}
We compare the proposed method with directed spectral embedding methods and a
flow-based reference method. In the diagnostic experiment D0, we additionally
compare raw hitting-time cosine with centered Green cosine variants. 
The main comparison focuses on \(K\)-based disjoint community detection
methods, where the planted number of communities \(K\) is given to the
algorithm. This is a standard controlled setting for synthetic benchmark
experiments, since it isolates the quality of the vertex representation and
clustering geometry from the separate problem of model selection.

We note, however, that the current Di-Green-FB-cosine-KMeans implementation
does not automatically estimate \(K\). In real applications, \(K\) must either
be chosen by a model-selection criterion, selected by sweeping a candidate
grid, or provided by prior knowledge. Automatic estimation of the number of
communities is left for future work.

Directed Louvain and directed Leiden variants are natural
modularity-optimization references. They are not included in the main
\(K\)-controlled embedding comparison because they optimize a different
objective and determine the partition structure through modularity dynamics,
whereas the purpose of the main synthetic comparison is to isolate the quality
of vertex representations under a prescribed \(K\). We therefore do not use
them as direct representation-learning baselines. A broader comparison with
directed modularity-optimization methods is an important direction for future
work.

For the proposed method and the spectral baselines, the final partition is
obtained by applying \(K\)-means or spherical \(K\)-means to a vertex
representation. We use \(K\)-means++ initialization \cite{ArthurVassilvitskii2007}.
For cosine-normalized coordinates, we use spherical \(K\)-means, following the
standard spherical clustering framework \cite{DhillonModha2001}.

\begin{itemize}
    \item \textbf{Di-Green-FB-cosine-KMeans}. This is the proposed method for
    disjoint community detection. It constructs the forward--backward
    truncated Green coordinates introduced in
    Section~\ref{sec:green-fb-coordinates}, normalizes the forward and
    backward profiles, concatenates them with weight \(\lambda\), and applies
    spherical \(K\)-means as described in
    Algorithm~\ref{alg:green-fb-kmeans}.

    \item \textbf{oPCA}. This is a directed spectral baseline based on the
    leading left and right singular vectors of the adjacency matrix. Given the
    directed adjacency matrix \(A\), we compute its leading singular vectors,
    concatenate the left and right embeddings, and apply \(K\)-means. This
    baseline follows the PCA-type directed spectral comparison used in
    \cite{WangLiangJi2020}.

    \item \textbf{rPCA}. This is the regularized version of oPCA. Instead of
    applying singular value decomposition directly to \(A\), it first
    constructs a regularized directed Laplacian or regularized normalized
    adjacency matrix, and then applies the same singular-vector embedding and
    \(K\)-means procedure. This regularization is designed to reduce the
    effect of degree heterogeneity in sparse directed networks
    \cite{KimShi2012,WangLiangJi2020}.

    \item \textbf{D-SCORE}. This is a directed spectral ratio method. It uses
    the leading left and right singular vectors of the adjacency matrix and
    forms ratio-type features before clustering. The ratio transformation is
    intended to reduce the influence of degree heterogeneity in directed
    degree-corrected block models \cite{WangLiangJi2020}.

    \item \textbf{D-SCOREq}. This is a row-normalized variant of D-SCORE. It
    replaces the entrywise ratio step by row normalization of the singular
    vector matrices, using an \(\ell_q\)-normalization before applying
    \(K\)-means. This method is also taken from the directed spectral
    framework of \cite{WangLiangJi2020}.



    \item \textbf{Directed Infomap}. We also report Directed Infomap as a
    flow-based reference method \cite{RosvallBergstrom2008}. Infomap is not a
    \(K\)-based embedding-clustering method: it optimizes a coding objective
    for random-walk flow and automatically determines the number of
    communities. Therefore, we report the number of communities returned by Infomap together
with the quality metrics and treat it as a supplementary flow-based reference.
\end{itemize}

\paragraph{Methods for overlapping community detection.}
We compare the proposed overlap-expansion method with two directed or
direction-aware references. The first is our previous Di-Cosine Overlap
Algorithm, which is included as an internal ablation baseline. This comparison
isolates the effect of replacing the previous directed cosine representation
by the proposed forward--backward Green cosine geometry while keeping the
same two-stage expansion philosophy.

The second reference is CoDA Directed Affiliation, an external model-based
method designed for directed networks. CoDA is included as a supplementary
directed affiliation baseline. In the oracle-initialized experiments, the main
comparison is between the two cosine-based expansion rules, while CoDA is
reported only as an external unsupervised reference.

\begin{itemize}
\item \textbf{Di-Cosine Overlap Algorithm}. This is our previously proposed
cosine-based overlap expansion algorithm from
\cite{DoPhan2025OverlapCosine}. Starting from an initial disjoint partition,
the method assigns a vertex to additional communities according to cosine
similarity between the vertex representation and community representatives.
It is included here as an internal ablation baseline, so that the effect of
replacing the earlier cosine representation by the proposed
forward--backward Green representation can be evaluated separately from the
general two-stage expansion strategy.

    \item \textbf{Di-Green-FB-Cosine Overlap}. This is the proposed overlap
    expansion method. It uses the forward--backward Green cosine geometry
    introduced in Section~\ref{sec:green-fb-overlap}. Starting from an initial
    disjoint partition, the method computes community-adaptive cosine
    thresholds and assigns a vertex to an additional community when its
    forward--backward Green profile is sufficiently aligned with that
    community. In the oracle-initialized experiments, the initial disjoint
    partition is obtained from the ground-truth cover by retaining one
    membership per vertex. In the end-to-end experiments, the initial
    partition is produced algorithmically by
    Di-Green-FB-cosine-KMeans.

    \item \textbf{CoDA Directed Affiliation}. CoDA, or Communities through
    Directed Affiliations, is a model-based method designed for directed
    networks \cite{YangLeskovec2014CoDA}. It represents vertices through
    outgoing and incoming community affiliation strengths and is able to
    capture both cohesive communities and two-mode directed structures. We
    include CoDA as a directed overlapping community detection baseline.

\end{itemize}

All methods are evaluated on the same graph instances and with the same
metrics. Unless explicitly stated otherwise, the parameters reported in the
main tables are fixed before evaluation and are not tuned separately for each
test instance. Ground-truth labels are used only for post-hoc evaluation. When
a parameter grid or sensitivity analysis is reported, it is presented as an
additional diagnostic study rather than as a supervised selection procedure
for the main reported results.

The two cosine-based overlap expansion methods, namely Di-Cosine Overlap and
Di-Green-FB-Cosine Overlap, require an initial disjoint partition. In the main
pipeline of this paper, we use Di-Green-FB-cosine-KMeans as the default
initialization, so that both the disjoint partition and the overlap expansion
are based on the same Green forward--backward geometry.

To isolate the quality of the expansion rule itself, we also consider an
oracle initialization in which each vertex is assigned to exactly one of its
planted communities and all additional overlapping memberships are removed.
This oracle setting is used only as a controlled ablation of the expansion
stage. CoDA is run as an external directed affiliation baseline and does not
use this initial partition.

\subsection{Parameter policy}
\label{subsec:parameter-policy}

For the proposed disjoint method, we use
\[
    \alpha=0.95,\qquad T=8,\qquad \lambda=0.5
\]
unless stated otherwise. Here \(\alpha\) is the teleportation parameter,
\(T\) is the truncation length, and \(\lambda\) balances forward and backward
Green coordinates.

For the proposed overlap method, we use
\[
    \alpha=0.90,\qquad T=10,\qquad \lambda=0.5.
\]
The community-adaptive overlap rule uses the fixed configuration
\[
    q=0.1,\qquad
    \delta=0.05,\qquad
    \eta=0.20,\qquad
    \theta_{\min}=-0.40,\qquad
    \varepsilon=0.
\]
Ground-truth labels are used only for evaluation, not for selecting the
reported output of individual test instances. Sensitivity results over
parameter grids, when reported, are used only to assess robustness of the
method and are not used to choose a separate best parameter setting for each
test graph.

\subsection{Disjoint directed community detection}
\label{subsec:disjoint-experiments}

The disjoint experiments evaluate whether the proposed Green
forward--backward coordinates provide a useful embedding for ordinary
non-overlapping directed community detection. All \(K\)-based methods are
given the planted number of communities. Directed Infomap is included as a
flow-based reference and is allowed to return its own number of communities.

We begin with a small hitting-time and Green-coordinate ablation. This
experiment directly tests the main motivation of the paper: raw hitting-time
rows are natural accessibility descriptors, but they are not well suited for
cosine comparison because of the common stationary baseline. We then evaluate the proposed method on two directed planted benchmark
families. Experiment D1 uses directed heterogeneous Gaussian partition graphs
and tests three regimes: a sparse regime with \(K=8\), a medium-size regime
with \(K=12\), and a large regime with \(K=15\). Experiment D2 uses directed
degree-corrected block graphs and follows the same increasing difficulty
pattern, thereby testing robustness to heterogeneous in-degree and out-degree
propensities. Finally, Experiment D3 evaluates full preprocessed real directed
networks using directed modularity as an internal structural-quality measure.

\subsubsection{Experiment D0: Hitting-time cosine versus centered Green cosine}
\label{subsubsec:exp-disjoint-d0}

This diagnostic experiment tests the main motivation behind the proposed
Green cosine geometry. Hitting times are natural accessibility quantities for
directed graphs. In the hitting-time coordinate construction of
Dang, Do, and Phan \cite{DangDoPhan2023}, a vertex \(i\) is represented by the
stationary-weighted hitting-time row
\[
    C_{\mathrm{HT}}(i)
    =
    H_i\Phi^{1/2}
    =
    \bigl(
    \sqrt{\phi_1}H(i,1),\ldots,\sqrt{\phi_n}H(i,n)
    \bigr),
\]
and the induced distance is
\[
    r_{ij}
    =
    \|H_i\Phi^{1/2}-H_j\Phi^{1/2}\|_2.
\]
This construction is based on the idea that vertices in the same community
should have similar expected hitting times to the rest of the graph.

However, the present diagnostic experiment asks a different question: whether
hitting-time rows are suitable for direct cosine comparison. Let \(Z\) denote
the fundamental matrix of the teleported Markov chain. By the standard
hitting-time identity
\[
    H(i,j)=\frac{Z_{jj}-Z_{ij}}{\phi_j},
\]
or equivalently
\[
    \mathcal G(P)_{ij}
    =
    \phi_j\bigl(H(\phi,j)-H(i,j)\bigr),
\]
one obtains
\[
    H_i
    =
    h-\mathcal G(P)_{i\bullet}\Phi^{-1},
\]
where
\[
    h=(H(\phi,1),\ldots,H(\phi,n))
\]
does not depend on the source vertex \(i\). Thus raw hitting-time rows contain
a common source-independent baseline. Since cosine similarity is not
translation-invariant, applying cosine directly to such rows may distort the
geometry.

Thus \(\mathcal G(P)_{i\bullet}\Phi^{-1}\) is the centered counterpart of the
unweighted hitting-time row \(H_i\). The stationary-weighted variant
\(\mathcal G(P)_{i\bullet}\Phi^{-1/2}\) is not reported separately, since the
purpose of this diagnostic experiment is to test whether removing the common
baseline improves cosine comparison.

We use directed heterogeneous Gaussian partition graphs with
\[
    N\in\{500,1000,1500\},\qquad
    K=8,\qquad
    \bar d=10,\qquad
    \mu\in\{0.10,0.20,0.30\}.
\]
The results are averaged over five random seeds for each parameter combination
\((N,\mu)\). The results are averaged over five random seeds for each parameter
combination. To keep the main tables readable, we report mean values in the
main text and use the same random seeds for all compared methods.

All methods are given the planted number of communities \(K\), and the final
partition is obtained by spherical \(K\)-means on normalized coordinates.
In the reported tables, we compare three coordinate choices:
\[
\begin{aligned}
    &\text{Raw-HT cosine:}
    &&x_i=H_i,\\[1mm]
    &\text{Centered unweighted HT / Green cosine:}
    &&x_i=\mathcal G(P)_{i\bullet}\Phi^{-1},\\[1mm]
    &\text{Truncated diffusive Green-FB cosine:}
    &&x_i=z_i^{(1/2)}
    =
    \left(
    \frac{1}{\sqrt 2}
    \frac{\mathcal G_T^{\mathrm{diff}}(P_\alpha^+)_{i\bullet}}
    {\|\mathcal G_T^{\mathrm{diff}}(P_\alpha^+)_{i\bullet}\|_2},
    \frac{1}{\sqrt 2}
    \frac{\mathcal G_T^{\mathrm{diff}}(P_\alpha^-)_{i\bullet}}
    {\|\mathcal G_T^{\mathrm{diff}}(P_\alpha^-)_{i\bullet}\|_2}
    \right).
\end{aligned}
\]
The first representation tests direct cosine comparison on raw hitting-time
rows. The second removes the common hitting-time baseline from the unweighted
hitting-time row and keeps only the source-dependent centered component. The
third is the practical forward--backward diffusive Green coordinate used in
the proposed method.

\begin{table}[H]
\centering
\caption{Hitting-time cosine versus centered Green cosine. Results are
averaged over \(N\in\{500,1000,1500\}\) and
\(\mu\in\{0.10,0.20,0.30\}\).}
\label{tab:disjoint-d0-overall}
\begin{tabular}{lcccc}
\toprule
Method & NMI & ARI & PairF1 & \(Q_{\mathrm{dir}}\) \\
\midrule
Raw-HT-cosine-KMeans
& 0.5322 & 0.3192 & 0.4614 & 0.3612 \\
\(\mathcal G(P)\Phi^{-1}\)-cosine-KMeans
& 0.9896 & 0.9911 & 0.9923 & 0.6764 \\
\textbf{Truncated Di-Green-FB-cosine-KMeans}
& \textbf{0.9947} & \textbf{0.9954} & \textbf{0.9960}
& \textbf{0.6774} \\
\bottomrule
\end{tabular}
\end{table}

\begin{table}[H]
\centering
\caption{Effect of the mixing parameter on NMI in Experiment D0.}
\label{tab:disjoint-d0-mu}
\begin{tabular}{lccc}
\toprule
Method & \(\mu=0.10\) & \(\mu=0.20\) & \(\mu=0.30\) \\
\midrule
Raw-HT-cosine-KMeans
& 0.8618 & 0.5482 & 0.1865 \\
\(\mathcal G(P)\Phi^{-1}\)-cosine-KMeans
& 0.9993 & 0.9917 & 0.9777 \\
\textbf{Truncated Di-Green-FB-cosine-KMeans}
& \textbf{1.0000} & \textbf{0.9990} & \textbf{0.9852} \\
\bottomrule
\end{tabular}
\end{table}

\paragraph{Main observation.}
Raw hitting-time rows are informative in the easiest regime, but they are not
stable under cosine comparison. Their NMI drops from \(0.8618\) at
\(\mu=0.10\) to \(0.1865\) at \(\mu=0.30\). After removing the common
hitting-time baseline, the centered coordinate
\(\mathcal G(P)_{i\bullet}\Phi^{-1}\)
becomes much more stable, with average NMI \(0.9896\). The proposed truncated
forward--backward Green coordinate gives the best overall performance. This supports the main point of the paper: hitting-time information is useful,
but raw hitting-time rows are not the right objects for cosine comparison; a
centered Green-type representation is needed. The role of
\(\mathcal G(P)\Phi^{-1}\)-cosine in this experiment is diagnostic: it verifies
that removing the hitting-time baseline is crucial. The proposed method itself
uses the truncated forward--backward diffusive Green coordinate, which combines
centering, exclusion of the time-zero self-spike, and incoming--outgoing
diffusion information.

\subsubsection{Experiment D1: Directed Heterogeneous Gaussian Partition Benchmarks}
\label{subsubsec:exp-disjoint-d1}

\paragraph{D1a. Sparse Gaussian Partition Graphs with \(K=8\).}

This experiment uses
\[
    N\in\{1000,1500,2000\},
    \qquad
    K=8,
    \qquad
    \bar d=5,
    \qquad
    \mu\in\{0.10,0.20,0.30\}.
\]
The results are averaged over five random seeds for each parameter combination
\((N,\mu)\). Standard deviations are omitted from the main tables for
readability and are reported in the supplementary material.

\begin{table}[H]
\centering
\caption{Disjoint detection on sparse heterogeneous directed graphs.
Results are averaged over \(N\in\{1000,1500,2000\}\) and
\(\mu\in\{0.10,0.20,0.30\}\).}
\label{tab:disjoint-d1-overall}
\begin{tabular}{lccccc}
\toprule
Method & NMI & ARI & PairF1 & \(Q_{\mathrm{dir}}\) & Avg. returned \(K\) \\
\midrule
\textbf{Di-Green-FB-cosine-KMeans}
& \textbf{0.9297} & \textbf{0.9333} & \textbf{0.9419}
& \textbf{0.6722} & 8.00 \\
rPCA
& 0.8955 & 0.8401 & 0.8621 & 0.6435 & 8.00 \\
oPCA
& 0.6837 & 0.4045 & 0.5096 & 0.4806 & 8.00 \\
D-SCOREq
& 0.6739 & 0.5548 & 0.6143 & 0.4903 & 8.00 \\
Directed Infomap
& 0.6699 & 0.5639 & 0.6236 & 0.4861 & 48.89 \\
D-SCORE
& 0.6393 & 0.5371 & 0.6014 & 0.4980 & 8.00 \\
\bottomrule
\end{tabular}
\end{table}

\begin{table}[H]
\centering
\caption{Effect of the mixing parameter on NMI in Experiment D1.
Results are averaged over \(N\in\{1000,1500,2000\}\).}
\label{tab:disjoint-d1-mu}
\begin{tabular}{lccc}
\toprule
Method & \(\mu=0.10\) & \(\mu=0.20\) & \(\mu=0.30\) \\
\midrule
\textbf{Di-Green-FB-cosine-KMeans}
& \textbf{0.9835} & \textbf{0.9429} & \textbf{0.8626} \\
rPCA
& 0.9833 & 0.9267 & 0.7766 \\
D-SCOREq
& 0.7937 & 0.6684 & 0.5595 \\
oPCA
& 0.7657 & 0.6795 & 0.6060 \\
D-SCORE
& 0.7370 & 0.6299 & 0.5509 \\
Directed Infomap
& 0.6677 & 0.7464 & 0.5957 \\
\bottomrule
\end{tabular}
\end{table}

\paragraph{Main observations.}
The proposed method obtains the best average NMI, ARI, PairF1, and directed
modularity among the methods with complete runs. The strongest competitor is
rPCA. At \(\mu=0.10\), the two methods are nearly tied, while at
\(\mu=0.30\) the proposed method has a clearer advantage. Directed Infomap
tends to return many more than \(K=8\) communities in this sparse setting,
which partly explains its lower ARI and PairF1.

\paragraph{D1b. Medium-Size Gaussian Partition Graphs with \(K=12\).}

This experiment uses
\[
    N\in\{3000,4000,5000\},
    \qquad
    K=12,
    \qquad
    \bar d=10,
    \qquad
    \mu\in\{0.10,0.20,0.30\}.
\]
The results are averaged over five random seeds for each parameter combination
\((N,\mu)\). Standard deviations are omitted from the main tables for
readability and are reported in the supplementary material.

\begin{table}[H]
\centering
\caption{Overall performance in Experiment D1b.}
\label{tab:disjoint-d1b-overall}
\begin{tabular}{lccccc}
\toprule
Method & NMI & ARI & PairF1 & \(Q_{\mathrm{dir}}\) & Avg. returned \(K\) \\
\midrule
\textbf{Di-Green-FB-cosine-KMeans}
& \textbf{0.9308} & \textbf{0.9378} & \textbf{0.9458}
& \textbf{0.6734} & 12.00 \\
rPCA
& 0.8798 & 0.8137 & 0.8394 & 0.6383 & 12.00 \\
oPCA
& 0.6418 & 0.3361 & 0.4593 & 0.4564 & 12.00 \\
D-SCOREq
& 0.5947 & 0.4750 & 0.5454 & 0.4347 & 12.00 \\
D-SCORE
& 0.5808 & 0.4721 & 0.5481 & 0.4636 & 12.00 \\
Directed Infomap
& 0.5321 & 0.3583 & 0.4119 & 0.4082 & 19 \\
\bottomrule
\end{tabular}
\end{table}

\begin{table}[H]
\centering
\caption{Effect of graph size on NMI in Experiment D1b. Results are averaged
over \(\mu\in\{0.10,0.20,0.30\}\).}
\label{tab:disjoint-d1b-size}
\begin{tabular}{lccc}
\toprule
Method & \(N=3000\) & \(N=4000\) & \(N=5000\) \\
\midrule
\textbf{Di-Green-FB-cosine-KMeans}
& \textbf{0.9272} & \textbf{0.9329} & \textbf{0.9324} \\
rPCA
& 0.8937 & 0.8677 & 0.8781 \\
oPCA
& 0.6702 & 0.6269 & 0.6284 \\
D-SCOREq
& 0.6347 & 0.5966 & 0.5528 \\
D-SCORE
& 0.6192 & 0.5722 & 0.5511 \\
Directed Infomap
& 0.5424 & 0.5469 & 0.5069 \\
\bottomrule
\end{tabular}
\end{table}

\begin{table}[H]
\centering
\caption{Effect of the mixing parameter on NMI in Experiment D1b.
Results are averaged over \(N\in\{3000,4000,5000\}\).}
\label{tab:disjoint-d1b-mu}
\begin{tabular}{lccc}
\toprule
Method & \(\mu=0.10\) & \(\mu=0.20\) & \(\mu=0.30\) \\
\midrule
\textbf{Di-Green-FB-cosine-KMeans}
& \textbf{0.9853} & \textbf{0.9494} & \textbf{0.8578} \\
rPCA
& 0.9836 & 0.8789 & 0.7769 \\
oPCA
& 0.7120 & 0.6632 & 0.5501 \\
D-SCOREq
& 0.6847 & 0.6214 & 0.4779 \\
D-SCORE
& 0.6451 & 0.5987 & 0.4987 \\
Directed Infomap
& 0.6743 & 0.5642 & 0.3578 \\
\bottomrule
\end{tabular}
\end{table}

\paragraph{Main observations.}
The proposed method remains the best method on average in this medium-size
regime. Its advantage over rPCA is small at \(\mu=0.10\), but becomes larger
at \(\mu=0.20\) and \(\mu=0.30\). The size-wise table shows that the proposed
method is stable as \(N\) increases from \(3000\) to \(5000\), whereas the
other baselines remain substantially lower except rPCA.

\paragraph{D1c. Large Gaussian Partition Graphs with \(K=15\).}

This experiment uses
\[
    N\in\{8000,9000,10000\},
    \qquad
    K=15,
    \qquad
    \bar d=10,
    \qquad
    \mu\in\{0.10,0.20,0.30\}.
\]
The results are averaged over five random seeds for each parameter combination
\((N,\mu)\). Standard deviations are omitted from the main tables for
readability and are reported in the supplementary material.

\begin{table}[H]
\centering
\caption{Overall performance in Experiment D1c.}
\label{tab:disjoint-d1c-overall}
\begin{tabular}{lccccc}
\toprule
Method & NMI & ARI & PairF1 & \(Q_{\mathrm{dir}}\) & Avg. returned \(K\) \\
\midrule
\textbf{Di-Green-FB-cosine-KMeans}
& \textbf{0.9273} & \textbf{0.9363} & \textbf{0.9446}
& \textbf{0.6732} & 15.00 \\
rPCA
& 0.8770 & 0.8116 & 0.8378 & 0.6373 & 15.00 \\
oPCA
& 0.6442 & 0.3356 & 0.4595 & 0.4629 & 15.00 \\
D-SCOREq
& 0.5815 & 0.4748 & 0.5439 & 0.4288 & 15.00 \\
D-SCORE
& 0.5714 & 0.4575 & 0.5362 & 0.4540 & 15.00 \\
Directed Infomap
& 0.5160 & 0.4126 & 0.4483 & 0.4429 & 267.67 \\
\bottomrule
\end{tabular}
\end{table}

\begin{table}[H]
\centering
\caption{Effect of the mixing parameter in Experiment D1c. Entries are averaged
over \(N\in\{8000,9000,10000\}\).}
\label{tab:disjoint-d1c-mu}
\begin{tabular}{lccc}
\toprule
Method & \(\mu=0.10\) & \(\mu=0.20\) & \(\mu=0.30\) \\
\midrule
\textbf{Di-Green-FB-cosine-KMeans}
& 0.9824 & \textbf{0.9418} & \textbf{0.8576} \\
rPCA
& \textbf{0.9849} & 0.8880 & 0.7582 \\
oPCA
& 0.7113 & 0.6579 & 0.5681 \\
D-SCOREq
& 0.6360 & 0.5980 & 0.5159 \\
D-SCORE
& 0.5938 & 0.5960 & 0.5326 \\
Directed Infomap
& 0.7535 & 0.5187 & 0.2758 \\
\bottomrule
\end{tabular}
\end{table}

\paragraph{Main observations.}
The proposed method has the best overall performance in this larger synthetic
regime. rPCA is slightly better at \(\mu=0.10\), but the proposed method is
clearly better at \(\mu=0.20\) and \(\mu=0.30\). Directed Infomap tends to
overpartition the graph in the difficult regime, as reflected by its large
average \(K_{\mathrm{pred}}\).

\subsubsection{Experiment D2: Directed Degree-Corrected Block Benchmarks}
\label{subsubsec:exp-disjoint-d2}

\paragraph{D2a. Directed Degree-Corrected Block Graphs with \(K=8\).}

This experiment uses directed degree-corrected block graphs with
\[
    N\in\{1000,1500,2000\},
    \qquad
    K=8,
    \qquad
    \bar d=10,
    \qquad
    \mu\in\{0.10,0.20,0.30\}.
\]
The degree-correction parameters are generated independently for outgoing and
incoming roles and normalized within each planted community. The results are
averaged over five random seeds for each parameter combination
\((N,\mu)\). Standard deviations are omitted from the main tables for
readability and are reported in the supplementary material.

\begin{table}[H]
\centering
\caption{Disjoint detection on directed degree-corrected block graphs.
Results are averaged over \(N\in\{1000,1500,2000\}\) and
\(\mu\in\{0.10,0.20,0.30\}\).}
\label{tab:disjoint-d2a-overall}
\begin{tabular}{lccccc}
\toprule
Method & NMI & ARI & PairF1 & \(Q_{\mathrm{dir}}\) & Avg. returned \(K\) \\
\midrule
\textbf{Di-Green-FB-cosine-KMeans}
& \textbf{0.9260} & \textbf{0.9375} & \textbf{0.9456}
& \textbf{0.6621} & 8.00 \\
D-SCOREq
& 0.8907 & 0.9036 & 0.9162 & 0.6481 & 8.00 \\
D-SCORE
& 0.8242 & 0.7980 & 0.8256 & 0.6090 & 8.00 \\
Directed Infomap
& 0.7853 & 0.7700 & 0.7919 & 0.5698 & 47.67 \\
rPCA
& 0.7100 & 0.4678 & 0.5518 & 0.5808 & 8.00 \\
oPCA
& 0.4996 & 0.1426 & 0.3088 & 0.3967 & 8.00 \\
\bottomrule
\end{tabular}
\end{table}

\begin{table}[H]
\centering
\caption{Effect of the mixing parameter on NMI in Experiment D2a.
Results are averaged over \(N\in\{1000,1500,2000\}\).}
\label{tab:disjoint-d2a-mu}
\begin{tabular}{lccc}
\toprule
Method & \(\mu=0.10\) & \(\mu=0.20\) & \(\mu=0.30\) \\
\midrule
\textbf{Di-Green-FB-cosine-KMeans}
& \textbf{0.9784} & \textbf{0.9353} & \textbf{0.8643} \\
D-SCOREq
& 0.9528 & 0.8981 & 0.8214 \\
D-SCORE
& 0.8643 & 0.8477 & 0.7607 \\
Directed Infomap
& 0.9397 & 0.8209 & 0.5953 \\
rPCA
& 0.7266 & 0.7087 & 0.6949 \\
oPCA
& 0.5228 & 0.4933 & 0.4828 \\
\bottomrule
\end{tabular}
\end{table}

\paragraph{Main observations.}
The proposed method obtains the best average NMI, ARI, PairF1, and directed
modularity among the methods with complete runs. The strongest competitor is
D-SCOREq, which is expected because it is designed to reduce the effect of
degree heterogeneity by row-normalizing directed spectral features. However,
Di-Green-FB-cosine-KMeans remains consistently better across all mixing
levels. At \(\mu=0.10\), both methods perform very well, while at
\(\mu=0.30\) the proposed method has a clearer advantage. Directed Infomap is
competitive in the easiest regime but tends to return many more than
\(K=8\) communities, which partly explains its lower ARI and PairF1 on
average.

\paragraph{D2b. Medium-Size Directed Degree-Corrected Block Graphs with \(K=12\).}

This experiment increases both the graph size and the number of planted
communities. It uses directed degree-corrected block graphs with
\[
    N\in\{3000,4000,5000\},
    \qquad
    K=12,
    \qquad
    \bar d=10,
    \qquad
    \mu\in\{0.10,0.20,0.30\}.
\]
As in Experiment D2a, the degree-correction parameters are generated
independently for outgoing and incoming roles and normalized within each
planted community. The results are averaged over five random seeds for each
parameter combination \((N,\mu)\). Standard deviations are omitted from the
main tables for readability and are reported in the supplementary material.

\begin{table}[H]
\centering
\caption{Disjoint detection on medium-size directed degree-corrected block
graphs. Results are averaged over \(N\in\{3000,4000,5000\}\) and
\(\mu\in\{0.10,0.20,0.30\}\).}
\label{tab:disjoint-d2b-overall}
\begin{tabular}{lccccc}
\toprule
Method & NMI & ARI & PairF1 & \(Q_{\mathrm{dir}}\) & Avg. returned \(K\) \\
\midrule
\textbf{Di-Green-FB-cosine-KMeans}
& \textbf{0.9186} & \textbf{0.9189} & \textbf{0.9263}
& \textbf{0.6920} & 12.00 \\
D-SCOREq
& 0.8955 & 0.9061 & 0.9146 & 0.6866 & 12.00 \\
D-SCORE
& 0.8088 & 0.7252 & 0.7532 & 0.6259 & 12.00 \\
Directed Infomap
& 0.7718 & 0.7189 & 0.7340 & 0.5696 & 120.78 \\
rPCA
& 0.6907 & 0.3427 & 0.4237 & 0.5862 & 12.00 \\
oPCA
& 0.4688 & 0.0716 & 0.2100 & 0.3665 & 12.00 \\
\bottomrule
\end{tabular}
\end{table}

\begin{table}[H]
\centering
\caption{Effect of the mixing parameter on NMI in Experiment D2b.
Results are averaged over \(N\in\{3000,4000,5000\}\).}
\label{tab:disjoint-d2b-mu}
\begin{tabular}{lccc}
\toprule
Method & \(\mu=0.10\) & \(\mu=0.20\) & \(\mu=0.30\) \\
\midrule
Di-Green-FB-cosine-KMeans
& 0.9486 & \textbf{0.9350} & \textbf{0.8723} \\
D-SCOREq
& \textbf{0.9547} & 0.9007 & 0.8313 \\
D-SCORE
& 0.8870 & 0.8035 & 0.7360 \\
Directed Infomap
& 0.9528 & 0.8294 & 0.5332 \\
rPCA
& 0.7116 & 0.6910 & 0.6694 \\
oPCA
& 0.4849 & 0.4719 & 0.4497 \\
\bottomrule
\end{tabular}
\end{table}

\paragraph{Main observations.}
The proposed method obtains the best average NMI, ARI, PairF1, and directed
modularity in this medium-size degree-corrected setting. D-SCOREq is again
the strongest \(K\)-based competitor and is slightly better at
\(\mu=0.10\). However, as the mixing parameter increases, the proposed method
becomes more stable: it outperforms D-SCOREq at both \(\mu=0.20\) and
\(\mu=0.30\). Directed Infomap is highly competitive in the easiest regime,
but it strongly overpartitions the graph as the mixing level increases,
returning on average far more than the planted \(K=12\) communities. This
leads to a clear degradation in ARI, PairF1, and directed modularity in the
harder regimes.

\paragraph{D2c. Large Directed Degree-Corrected Block Graphs with \(K=15\).}

This experiment further increases the graph size and the number of planted
communities. It uses directed degree-corrected block graphs with
\[
    N\in\{8000,9000,10000\},
    \qquad
    K=15,
    \qquad
    \bar d=10,
    \qquad
    \mu\in\{0.10,0.20,0.30\}.
\]
The degree-correction parameters are generated independently for outgoing and
incoming roles and normalized within each planted community. The results are
averaged over five random seeds for each parameter combination
\((N,\mu)\). Standard deviations are omitted from the main tables for
readability and are reported in the supplementary material.

\begin{table}[H]
\centering
\caption{Disjoint detection on large directed degree-corrected block graphs.
Results are averaged over \(N\in\{8000,9000,10000\}\) and
\(\mu\in\{0.10,0.20,0.30\}\).}
\label{tab:disjoint-d2c-overall}
\begin{tabular}{lccccc}
\toprule
Method & NMI & ARI & PairF1 & \(Q_{\mathrm{dir}}\) & Avg. returned \(K\) \\
\midrule
\textbf{Di-Green-FB-cosine-KMeans}
& \textbf{0.9135} & \textbf{0.9094} & \textbf{0.9164}
& \textbf{0.7028} & 15.00 \\
D-SCOREq
& 0.8935 & 0.9056 & 0.9129 & 0.7010 & 15.00 \\
D-SCORE
& 0.8252 & 0.7666 & 0.7862 & 0.6509 & 15.00 \\
rPCA
& 0.6927 & 0.3072 & 0.3817 & 0.5922 & 15.00 \\
oPCA
& 0.4643 & 0.0564 & 0.1796 & 0.3617 & 15.00 \\
Directed Infomap
& 0.4307 & 0.1523 & 0.2296 & 0.3195 & 240.56 \\
\bottomrule
\end{tabular}
\end{table}

\begin{table}[H]
\centering
\caption{Effect of the mixing parameter on NMI in Experiment D2c.
Results are averaged over \(N\in\{8000,9000,10000\}\).}
\label{tab:disjoint-d2c-mu}
\begin{tabular}{lccc}
\toprule
Method & \(\mu=0.10\) & \(\mu=0.20\) & \(\mu=0.30\) \\
\midrule
\textbf{Di-Green-FB-cosine-KMeans}
& \textbf{0.9657} & \textbf{0.9174} & \textbf{0.8575} \\
D-SCOREq
& 0.9545 & 0.9005 & 0.8256 \\
D-SCORE
& 0.8972 & 0.8345 & 0.7439 \\
rPCA
& 0.7082 & 0.6947 & 0.6751 \\
oPCA
& 0.4779 & 0.4706 & 0.4444 \\
Directed Infomap
& 0.4116 & 0.4696 & 0.4110 \\
\bottomrule
\end{tabular}
\end{table}

\paragraph{Main observations.}
The proposed method remains the best method on average in this large
degree-corrected regime. It obtains the highest average NMI, ARI, PairF1, and
directed modularity, while always returning the prescribed number
\(K=15\) of communities. D-SCOREq is again the strongest \(K\)-based
competitor and is very close in ARI, PairF1, and \(Q_{\mathrm{dir}}\).
However, Di-Green-FB-cosine-KMeans has a consistent advantage in NMI across
all three mixing levels. The gap is small at \(\mu=0.10\), but becomes clearer
as the mixing parameter increases.

Directed Infomap behaves differently in this large sparse degree-corrected
setting. It substantially overpartitions the graph, returning on average
about \(240\) communities instead of the planted \(K=15\). This overpartitioning
leads to much lower ARI, PairF1, and directed modularity. The results suggest
that, even when the number of vertices reaches \(10^4\) and the communities
are affected by both in-degree and out-degree heterogeneity, the proposed
forward--backward Green geometry remains stable and competitive.

\subsubsection{Experiment D3: Full real directed networks}
\label{subsubsec:exp-disjoint-d3}

The previous experiments use synthetic directed benchmarks with known planted
communities. We now complement them with an evaluation on real directed
networks. Since most real networks do not provide a reliable disjoint
ground-truth partition compatible with all methods, we do not report NMI, ARI,
or PairF1 in this experiment. Instead, we compare the structural quality of
the detected partitions using directed modularity \(Q_{\mathrm{dir}}\).

The purpose of this experiment is therefore different from the synthetic
experiments. It does not claim ground-truth recovery. Rather, it tests whether
the proposed forward--backward Green geometry produces structurally coherent
directed partitions on real networks, compared with directed spectral
embeddings and a random-walk flow method.

We emphasize that this experiment is not intended as a competition against
algorithms that directly optimize directed modularity, such as directed
Louvain or directed Leiden variants. Since the evaluation metric in this
section is \(Q_{\mathrm{dir}}\), a modularity optimizer would be a natural
additional reference. The purpose here is more limited: to check whether a
partition obtained from the proposed Green diffusion geometry, without
directly optimizing \(Q_{\mathrm{dir}}\), nevertheless attains competitive
directed modularity on diverse real directed networks.

We use the full preprocessed directed graphs, without node sampling and
without extracting a largest connected component. For temporal or repeated-edge data, we convert
the graph to an unweighted directed graph by collapsing repeated directed
interactions into a single binary edge. For signed networks, we use the
positive directed subgraph, since the Markov transition matrix and
\(Q_{\mathrm{dir}}\) used in this paper require a nonnegative adjacency
matrix. Thus, \(m_{\mathrm{raw}}\) denotes the number of raw records in the
original file, while \(m\) denotes the number of binary directed edges used in
the experiment. Similarly, for the positive signed networks, \(n\) denotes the
number of vertices incident to at least one retained positive directed edge.

For every \(K\)-based embedding method, we sweep the same grid
\[
K\in\{4,6,8,10,12,16,24,32\}
\]
and report the best directed modularity over this grid. Thus the model
selection protocol is identical for oPCA, rPCA, D-SCORE, D-SCOREq, and
Di-Green-FB-cosine-KMeans. Directed Infomap is not given \(K\); it
automatically returns its own number of communities. The same directed
modularity function is then used to evaluate all outputs. For compactness, the
selected \(K\) is reported only for the proposed method in the main table.

\begin{table}[H]
\centering
\caption{Real directed networks used in Experiment D3. The reported \(n\) and
\(m\) are the numbers of vertices and binary directed edges used after
preprocessing.}
\label{tab:real-datasets-d3}
\renewcommand{\arraystretch}{1.18}
\resizebox{\textwidth}{!}{
\begin{tabular}{
|>{\raggedright\arraybackslash}p{3.6cm}
|>{\raggedright\arraybackslash}p{2.6cm}
|r
|r
|>{\raggedright\arraybackslash}p{8.0cm}|
}
\hline
\textbf{Dataset} & \textbf{Type} & \(\boldsymbol{n}\) & \(\boldsymbol{m}\) & \textbf{Description} \\
\hline

email-Eu-core
& email
& 986 & 24,929
& Email communication network from a large European research institution.
A directed edge \(u\to v\) means that person \(u\) sent at least one email to
person \(v\). The original dataset also contains department labels, but these
labels are not used in this modularity-only experiment
\cite{SNAPEmailEuCore,YinBensonLeskovecGleich2017}. \\
\hline

CollegeMsg
& social-message
& 1,899 & 20,296
& Temporal private-message network from an online social network at the
University of California, Irvine. A raw record \((u,v,t)\) means that user
\(u\) sent a message to user \(v\) at time \(t\). We collapse all repeated
messages into a static binary directed graph
\cite{SNAPCollegeMsg,PanzarasaOpsahlCarley2009}. \\
\hline

wiki-Vote
& voting
& 7,115 & 103,689
& Wikipedia adminship voting network up to January 2008. Nodes are Wikipedia
users, and a directed edge \(u\to v\) means that user \(u\) voted on user
\(v\)'s adminship request. This is a directed voting network with strong
asymmetry and heterogeneous roles
\cite{SNAPWikiVote} \\
\hline

p2p-Gnutella04
& p2p-snapshot
& 10,876 & 39,994
& Snapshot of the Gnutella peer-to-peer file-sharing network on August 4,
2002. Nodes are hosts in the Gnutella topology, and directed edges represent
connections between hosts. We use this snapshot as one representative
technological directed network
\cite{SNAPGnutella04,SNAPGnutella08,RipeanuFosterIamnitchi2002}. \\
\hline

p2p-Gnutella08
& p2p-snapshot
& 6,301 & 20,777
& Snapshot of the same Gnutella peer-to-peer system on August 8, 2002. We keep
only two Gnutella snapshots in the final table, rather than many consecutive
days, to avoid over-weighting one highly similar data family
\cite{LeskovecKleinbergFaloutsos2007,RipeanuFosterIamnitchi2002}. \\
\hline

soc-sign-bitcoinalpha-pos
& bitcoin-trust
& 3,683 & 22,650
& Directed signed trust network from the Bitcoin Alpha trading platform.
The original edge weight ranges from \(-10\) to \(+10\). We retain only
positive trust ratings and convert them to binary directed edges, producing a
nonnegative graph suitable for Markov diffusion and directed modularity
\cite{KumarSpezzanoSubrahmanianFaloutsos2016}. \\
\hline

soc-sign-bitcoinotc-pos
& bitcoin-trust
& 5,573 & 32,029
& Directed signed trust network from the Bitcoin OTC trading platform.
As in Bitcoin Alpha, an edge records a user rating another user after a
transaction. We use the positive trust subgraph and binarize the retained
directed edges
\cite{KumarSpezzanoSubrahmanianFaloutsos2016}. \\
\hline

wiki-RfA-pos
& wikipedia-admin-vote
& 10,015 & 139,741
& Wikipedia Requests for Adminship dataset from 2003 to May 2013. The original
data contain support, neutral, and oppose votes, together with textual
comments. We retain only positive support votes to obtain a nonnegative
directed voting graph
\cite{WestPaskovLeskovecPotts2014}. \\
\hline

\end{tabular}
}
\end{table}

\begin{table}[H]
\centering
\caption{Real directed network experiment. For \(K\)-based methods, each entry
is the best \(Q_{\mathrm{dir}}\) over
\(K\in\{4,6,8,10,12,16,24,32\}\). Directed Infomap automatically determines
its own number of communities.}
\label{tab:real-d3-qdir}
\resizebox{\textwidth}{!}{
\begin{tabular}{l c c c c c c c}
\toprule
Dataset
& oPCA
& rPCA
& D-SCORE
& D-SCOREq
& Directed Infomap
& \textbf{Di-Green-FB}
& Best \(K\) of Di-Green-FB \\
\midrule

CollegeMsg
& 0.004989
& 0.039618
& 0.039266
& 0.023548
& 0.075373
& \textbf{0.218961}
& 4 \\

email-Eu-core
& 0.303777
& 0.342633
& 0.329602
& 0.395349
& \textbf{0.415180}
& 0.401660
& 8 \\

p2p-Gnutella04
& 0.039609
& 0.051285
& 0.012374
& 0.006079
& 0.256302
& \textbf{0.261930}
& 32 \\

p2p-Gnutella08
& 0.146861
& 0.124313
& 0.014479
& 0.004016
& 0.331302
& \textbf{0.340441}
& 16 \\

soc-sign-bitcoinalpha-pos
& 0.106985
& 0.218919
& 0.331342
& 0.359611
& 0.409717
& \textbf{0.469817}
& 8 \\

soc-sign-bitcoinotc-pos
& 0.108767
& 0.212601
& 0.282383
& 0.383137
& 0.009821
& \textbf{0.474191}
& 10 \\

wiki-RfA-pos
& 0.261137
& 0.310469
& 0.287249
& \textbf{0.448840}
& 0.432033
& 0.443786
& 4 \\

wiki-Vote
& 0.206345
& 0.263301
& 0.215829
& 0.338127
& 0.000999
& \textbf{0.370248}
& 4 \\

\midrule
Average
& 0.1473
& 0.1954
& 0.1891
& 0.2448
& 0.2413
& \textbf{0.3726}
& -- \\

\bottomrule
\end{tabular}
}
\end{table}

\begin{table}[H]
\centering
\caption{Winner by dataset in Experiment D3.}
\label{tab:real-d3-winners}
\begin{tabular}{l l c c}
\toprule
Dataset & Winner & Best \(Q_{\mathrm{dir}}\) & Selected/Predicted \(K\) \\
\midrule
CollegeMsg & Di-Green-FB-cosine-KMeans & 0.218961 & 4 \\
email-Eu-core & Directed Infomap & 0.415180 & 18 \\
p2p-Gnutella04 & Di-Green-FB-cosine-KMeans & 0.261930 & 32 \\
p2p-Gnutella08 & Di-Green-FB-cosine-KMeans & 0.340441 & 16 \\
soc-sign-bitcoinalpha-pos & Di-Green-FB-cosine-KMeans & 0.469817 & 8 \\
soc-sign-bitcoinotc-pos & Di-Green-FB-cosine-KMeans & 0.474191 & 10 \\
wiki-RfA-pos & D-SCOREq & 0.448840 & 4 \\
wiki-Vote & Di-Green-FB-cosine-KMeans & 0.370248 & 4 \\
\bottomrule
\end{tabular}
\end{table}

\paragraph{Main observations.}
Among the embedding-based and flow-based references considered in this
experiment, the proposed Di-Green-FB-cosine-KMeans method obtains the highest
directed modularity on six of the eight real directed networks. On the two
remaining networks, it remains close to the best method: on email-Eu-core,
Directed Infomap obtains \(Q_{\mathrm{dir}}=0.415180\), while Di-Green-FB
obtains \(0.401660\); on wiki-RfA-pos, D-SCOREq obtains
\(0.448840\), while Di-Green-FB obtains \(0.443786\).

These results should be interpreted as evidence that the proposed Green
forward--backward geometry can produce structurally coherent directed
partitions. They should not be interpreted as showing superiority over
dedicated directed-modularity maximization algorithms, which are not included
in this comparison.

The advantage is especially clear on the trust and social-message networks.
On CollegeMsg, the proposed method reaches \(Q_{\mathrm{dir}}=0.218961\),
whereas the strongest non-Green competitor, Directed Infomap, reaches only
\(0.075373\). On the two Bitcoin trust networks, the proposed method obtains
\(0.469817\) and \(0.474191\), clearly above the directed spectral baselines
and Infomap. This suggests that the forward--backward Green geometry is
particularly effective when community structure is encoded through directed
accessibility and asymmetric trust relations.

The Gnutella results are also informative. Directed Infomap is competitive on
the two peer-to-peer snapshots, but Di-Green-FB remains slightly better:
\(0.261930\) versus \(0.256302\) on p2p-Gnutella04, and \(0.340441\) versus
\(0.331302\) on p2p-Gnutella08. Since these snapshots come from the same
Gnutella data family, we include only two representative days in the final
experiment to avoid artificially inflating the number of wins by repeatedly
testing nearly identical network instances.

Averaged over the eight real networks, Di-Green-FB obtains
\[
    \overline Q_{\mathrm{dir}}=0.3726,
\]
compared with \(0.2448\) for D-SCOREq, \(0.2413\) for Directed Infomap,
\(0.1954\) for rPCA, \(0.1891\) for D-SCORE, and \(0.1473\) for oPCA.
This suggests that the proposed Green forward--backward representation is
capable of producing high-\(Q_{\mathrm{dir}}\) partitions on several real
directed networks.

\paragraph{Interpretation.}
Overall, the real-network experiment provides supporting evidence that the
forward--backward Green representation can produce high-\(Q_{\mathrm{dir}}\)
partitions on diverse directed networks. This evidence is structural rather
than ground-truth-based, and should be interpreted together with the synthetic
benchmark results.

\subsection{Overlapping directed community detection}
\label{subsec:overlap-experiments}

The overlapping experiments evaluate whether the proposed Green cosine
geometry can expand a disjoint partition into an overlapping cover. We
consider two types of initialization.

First, oracle initialization assigns each vertex to exactly one of its true
memberships and removes all additional memberships. This setting is used as an
input protocol for the two cosine-based expansion methods and isolates the
overlap expansion rule from the quality of the initial disjoint partition.
Second, end-to-end initialization obtains the initial partition algorithmically
from the graph. This setting evaluates the full pipeline.

The oracle experiments should be interpreted as controlled ablations of the
second stage. The end-to-end experiment is the main practical evaluation of
the complete overlapping community detection pipeline.

\subsubsection{Experiment O1: Oracle initialization on small and medium graphs}
\label{subsubsec:exp-overlap-o1}

This experiment uses oracle disjoint initialization for the two cosine-based
expansion methods and tests the overlap expansion stage. The oracle
initialization is used as an input protocol rather than as a method reported
in the table. CoDA is run independently as an external directed affiliation
baseline.

The comparison with CoDA in this oracle-initialized setting should be
interpreted cautiously, because the two cosine-based expansion methods are
given an oracle disjoint initialization, whereas CoDA is run as a fully
unsupervised affiliation model. Therefore, the primary purpose of this
experiment is to compare Di-Green-FB-Cosine Overlap with Di-Cosine Overlap
under the same initialization. CoDA is included only as a supplementary
external reference.

The graph parameters are
\[
    N\in\{500,1000,1500\},
    \qquad
    K=8,
    \qquad
    \bar d=10,
    \qquad
    o_n=0.15N,
    \qquad
    o_m=2,
\]
with
\[
    \mu\in\{0.10,0.20,0.30\}.
\]
Results are averaged over five random seeds for each parameter combination
\((N,\mu)\), giving \(45\) graph instances. Standard deviations are omitted
from the main tables for readability and are reported in the supplementary
material.

\begin{table}[H]
\centering
\caption{Overlap detection in Experiment O1. The two cosine-based expansion
methods use oracle disjoint initialization; CoDA is run independently as an
external directed affiliation baseline.}
\label{tab:overlap-o1-overall}
\begin{tabular}{lcccc}
\toprule
Method & ONMI & PairF1 & OverlapF1 & Score \\
\midrule
Di-Cosine Overlap Algorithm
& 0.9294 & 0.9628 & 0.8311 & 0.9078 \\
\textbf{Di-Green-FB-Cosine Overlap}
& \textbf{0.9478} & \textbf{0.9747} & \textbf{0.8828}
& \textbf{0.9351} \\
CoDA Directed Affiliation
& 0.2605 & 0.4534 & 0.2935 & 0.3358 \\
\bottomrule
\end{tabular}
\end{table}

\begin{table}[H]
\centering
\caption{Effect of the mixing parameter on overlap Score in Experiment O1.}
\label{tab:overlap-o1-mu-score}
\begin{tabular}{lccc}
\toprule
Method
& \(\mu=0.10\) & \(\mu=0.20\) & \(\mu=0.30\) \\
\midrule
Di-Cosine Overlap Algorithm
& 0.9662 & 0.9264 & 0.8307 \\
\textbf{Di-Green-FB-Cosine Overlap}
& \textbf{0.9789} & \textbf{0.9388} & \textbf{0.8877} \\
CoDA Directed Affiliation
& 0.4558 & 0.3189 & 0.2327 \\
\bottomrule
\end{tabular}
\end{table}

\paragraph{Main observations.}
The proposed Green overlap rule gives the best average ONMI, PairF1,
OverlapF1, and Score. The improvement over Di-Cosine is most visible in the
hardest regime \(\mu=0.30\), where the Green forward--backward geometry gives
a larger gain in OverlapF1. Since this experiment uses oracle initialization,
the result should be interpreted as evidence for the quality of the expansion
rule rather than the full pipeline.

\subsubsection{Experiment O2: Oracle initialization on larger graphs}
\label{subsubsec:exp-overlap-o2}

This experiment again uses oracle disjoint initialization, but increases the
graph size and number of communities. As in Experiment O1, the oracle
initialization is used only to isolate the quality of the overlap expansion
rule. The comparison with CoDA should again be interpreted as supplementary,
because CoDA is run without oracle initialization.

The graph parameters are
\[
    N\in\{3000,4000,5000\},
    \qquad
    K=12,
    \qquad
    \bar d=10,
    \qquad
    o_n=0.15N,
    \qquad
    o_m=2,
\]
with
\[
    \mu\in\{0.10,0.20,0.30\}.
\]
Results are averaged over five random seeds for each parameter combination
\((N,\mu)\), giving \(45\) graph instances. Standard deviations are omitted
from the main tables for readability and are reported in the supplementary
material.

\begin{table}[H]
\centering
\caption{Overlap detection in Experiment O2. The two cosine-based expansion
methods use oracle disjoint initialization; CoDA is run independently as an
external directed affiliation baseline.}
\label{tab:overlap-o2-overall}
\begin{tabular}{lcccc}
\toprule
Method & ONMI & PairF1 & OverlapF1 & Score \\
\midrule
Di-Cosine Overlap Algorithm
& 0.9457 & 0.9707 & 0.8724 & 0.9296 \\
\textbf{Di-Green-FB-Cosine Overlap}
& \textbf{0.9553} & \textbf{0.9762} & \textbf{0.8904}
& \textbf{0.9406} \\
CoDA Directed Affiliation
& 0.1413 & 0.3098 & 0.2699 & 0.2403 \\
\bottomrule
\end{tabular}
\end{table}

\paragraph{Main observations.}
The proposed method remains the best overlap expansion method in the larger
oracle-initialized setting. The gap over Di-Cosine is moderate on average but
becomes clearer in the high-mixing regime, especially for OverlapF1. CoDA performs worse in this particular planted benchmark. Since the two
cosine-based methods use oracle initialization in this experiment, the main
conclusion is that the proposed Green representation improves the expansion
rule relative to the previous Di-Cosine baseline under the same initialization.

\subsubsection{Experiment O3: End-to-end overlapping community detection}
\label{subsubsec:exp-overlap-o3}

This experiment evaluates the full overlapping community detection pipeline.
Unlike Experiments O1 and O2, the initial disjoint partition is not supplied by
the ground truth. For the proposed pipeline, the initial partition is produced
by Di-Green-FB-cosine-KMeans and then expanded by the
Di-Green-FB-Cosine Overlap rule.

The benchmark parameters are
\[
    (N,K)\in\{(1000,8),(3000,12),(5000,15),(7000,20)\},
\]
with
\[
    \bar d=10,
    \qquad
    o_n=0.15N,
    \qquad
    o_m=2,
    \qquad
    \mu\in\{0.10,0.20\}.
\]
This gives eight end-to-end graph instances.

\begin{table}[H]
\centering
\caption{Overall performance in the end-to-end overlap experiment.}
\label{tab:overlap-o3-overall}
\begin{tabular}{lcccc}
\toprule
Method & ONMI & PairF1 & OverlapF1 & Score \\
\midrule
Di-Cosine Overlap Algorithm
& 0.9334 & 0.9651 & 0.8783 & 0.9256 \\
\textbf{Di-Green-FB-Cosine Overlap}
& \textbf{0.9491} & \textbf{0.9731} & \textbf{0.9167}
& \textbf{0.9463} \\
CoDA Directed Affiliation
& 0.2677 & 0.4122 & 0.2894 & 0.3231 \\
\bottomrule
\end{tabular}
\end{table}

\begin{table}[H]
\centering
\caption{Effect of the mixing parameter in the end-to-end overlap experiment.}
\label{tab:overlap-o3-mu-score}
\begin{tabular}{lcc}
\toprule
Method & \(\mu=0.10\) & \(\mu=0.20\) \\
\midrule
Di-Cosine Overlap Algorithm
& 0.9511 & 0.9001 \\
\textbf{Di-Green-FB-Cosine Overlap}
& \textbf{0.9742} & \textbf{0.9184} \\
CoDA Directed Affiliation
& 0.3698 & 0.2764 \\
\bottomrule
\end{tabular}
\end{table}

\begin{table}[H]
\centering
\caption{Detailed performance of the proposed end-to-end pipeline.}
\label{tab:overlap-o3-green-grid}
\begin{tabular}{c c c c c c c}
\toprule
\(N\) & \(K\) & \(\mu\) & ONMI & PairF1 & OverlapF1 & Score \\
\midrule
1000 & 8  & 0.10 & 0.9879 & 0.9958 & 0.9800 & 0.9879 \\
1000 & 8  & 0.20 & 0.9329 & 0.9591 & 0.8757 & 0.9226 \\
3000 & 12 & 0.10 & 0.9745 & 0.9887 & 0.9642 & 0.9758 \\
3000 & 12 & 0.20 & 0.8842 & 0.9328 & 0.7915 & 0.8695 \\
5000 & 15 & 0.10 & 0.9678 & 0.9855 & 0.9521 & 0.9685 \\
5000 & 15 & 0.20 & 0.9492 & 0.9758 & 0.9241 & 0.9497 \\
7000 & 20 & 0.10 & 0.9630 & 0.9813 & 0.9494 & 0.9646 \\
7000 & 20 & 0.20 & 0.9336 & 0.9657 & 0.8968 & 0.9320 \\
\bottomrule
\end{tabular}
\end{table}

\paragraph{Main observations.}
This is the main practical overlap experiment, because the initialization is
computed from the graph rather than supplied by the ground truth. The proposed end-to-end pipeline obtains the best average Score among the
tested overlap methods. The improvement over the previous Di-Cosine expansion
baseline indicates that the Green forward--backward geometry provides a
stronger basis for adding overlapping memberships after the initial disjoint
partition has been computed. The hardest instance is \(N=3000,K=12,\mu=0.20\), where the initial
partition is weaker; nevertheless, the final overlap cover remains
competitive.

\subsection{Summary of experimental findings}
\label{subsec:experiment-summary}

The experiments support the main geometric motivation of the paper. Raw
hitting-time rows are unstable under direct cosine comparison, while centered
Green representations are substantially more robust. On synthetic disjoint
benchmarks, the proposed forward--backward Green coordinate is competitive
with, and often improves over, the tested directed spectral and flow-based
baselines, especially in moderate and high-mixing regimes. The overlap
experiments show that the same geometry improves the expansion step relative
to the previous Di-Cosine baseline, both under oracle initialization and in
the end-to-end setting. The real-network modularity experiment provides
additional structural evidence, but should be interpreted as an internal
quality evaluation rather than as a ground-truth recovery test.

\section{Conclusion}
\label{sec:conclusion}

In this paper, we proposed a forward--backward Green cosine framework for
community detection in directed graphs, with an extension to overlapping
memberships. The starting point was the observation that raw hitting-time
profiles contain a source-independent stationary baseline and are therefore
not well suited for direct cosine comparison. By replacing raw hitting-time
profiles with centered Green diffusion profiles, the proposed representation
removes this background and compares vertices through their source-dependent
deviations from stationarity.

To account for edge asymmetry, we combined the diffusive Green profile of the
forward walk with the corresponding profile on the edge-reversed graph. The
resulting forward--backward coordinate captures both outgoing and incoming
diffusion roles. This geometry was used in two stages: first to construct an
initial disjoint partition through Di-Green-FB-cosine-KMeans, and then to
expand that partition into an overlapping cover by a community-adaptive cosine
rule.

We established basic theoretical support for the proposed geometry, including
the centered-inverse interpretation of the Green operator, the baseline
distortion of raw hitting-time cosine, a truncation bound for finite-time
diffusive Green coordinates, the positive semidefinite kernel property of the
forward--backward cosine, and deterministic margin conditions for the
idealized disjoint and overlapping cosine assignment rules.

Experiments on synthetic directed benchmarks indicate that the proposed
representation is substantially more stable than raw hitting-time cosine and
is competitive with directed spectral and flow-based baselines. The overlap
experiments further suggest that the same Green cosine geometry provides a
useful mechanism for adding secondary memberships when the initial disjoint
partition is reliable. Real-network experiments, evaluated by directed
modularity, provide supporting evidence that the method can produce coherent
directed partitions beyond planted benchmarks.

The main limitation of the present implementation is computational. Full
Green-coordinate construction requires \(O(n^2)\) memory, and the direct
overlap expansion stage can be expensive for very large graphs. Future work
will therefore focus on low-rank, landmark, sketched, and nearest-neighbor
approximations, automatic estimation of the number of communities, and broader
end-to-end validation on real directed networks with reliable overlapping
ground truth.

\bibliographystyle{plain}
\bibliography{references}

@article{Fortunato2010,
  author  = {Fortunato, Santo},
  title   = {Community Detection in Graphs},
  journal = {Physics Reports},
  volume  = {486},
  number  = {3--5},
  pages   = {75--174},
  year    = {2010}
}

@article{MalliarosVazirgiannis2013,
  author  = {Malliaros, Fragkiskos D. and Vazirgiannis, Michalis},
  title   = {Clustering and Community Detection in Directed Networks: A Survey},
  journal = {Physics Reports},
  volume  = {533},
  number  = {4},
  pages   = {95--142},
  year    = {2013}
}

@article{Chung2005,
  author  = {Chung, Fan},
  title   = {Laplacians and the Cheeger Inequality for Directed Graphs},
  journal = {Annals of Combinatorics},
  volume  = {9},
  number  = {1},
  pages   = {1--19},
  year    = {2005}
}

@article{LiZhang2012,
  author  = {Li, Yanhua and Zhang, Zhi-Li},
  title   = {Digraph Laplacian and the Degree of Asymmetry},
  journal = {Internet Mathematics},
  volume  = {8},
  number  = {4},
  pages   = {381--401},
  year    = {2012}
}

@article{Peel2017,
  author  = {Peel, Leto and Larremore, Daniel B. and Clauset, Aaron},
  title   = {The Ground Truth about Metadata and Community Detection in Networks},
  journal = {Science Advances},
  volume  = {3},
  number  = {5},
  pages   = {e1602548},
  year    = {2017}
}

@article{Jebabli2018,
  author  = {Jebabli, Malek and Cherifi, Hocine and Cherifi, Chantal and Hamouda, Atef},
  title   = {Community Detection Algorithm Evaluation with Ground-Truth Data},
  journal = {Physica A: Statistical Mechanics and its Applications},
  volume  = {492},
  pages   = {651--706},
  year    = {2018},
  doi     = {10.1016/j.physa.2017.10.018}
}

@article{Harenberg2014,
  author  = {Harenberg, Steve and Bello, Gonzalo and Gjeltema, Lucas and Ranshous, Stephen and Harlalka, Jitendra and Seay, Ramona and Padmanabhan, Kanchana and Samatova, Nagiza},
  title   = {Community Detection in Large-Scale Networks: A Survey and Empirical Evaluation},
  journal = {WIREs Computational Statistics},
  volume  = {6},
  number  = {6},
  pages   = {426--439},
  year    = {2014},
  doi     = {10.1002/wics.1319}
}

@article{Ponomarenko2021,
  author  = {Ponomarenko, A. and Pitsoulis, L. and Shamshetdinov, M.},
  title   = {Overlapping Community Detection in Networks Based on Link Partitioning and Partitioning Around Medoids},
  journal = {PLOS ONE},
  volume  = {16},
  number  = {8},
  pages   = {e0255717},
  year    = {2021}
}

@article{Chen2010,
  author  = {Chen, Duanbing and Shang, Mingsheng and Lv, Zehua and Fu, Yan},
  title   = {Detecting Overlapping Communities of Weighted Networks via a Local Algorithm},
  journal = {Physica A: Statistical Mechanics and its Applications},
  volume  = {389},
  number  = {19},
  pages   = {4177--4187},
  year    = {2010},
  doi     = {10.1016/j.physa.2010.05.046}
}

@article{Hunter1982,
  author  = {Hunter, Jeffrey J.},
  title   = {Generalized Inverses and Their Application to Applied Probability Problems},
  journal = {Linear Algebra and its Applications},
  volume  = {45},
  pages   = {157--198},
  year    = {1982},
  doi     = {10.1016/0024-3795(82)90218-X}
}

@book{KemenySnell1976,
  author    = {Kemeny, John G. and Snell, J. Laurie},
  title     = {Finite Markov Chains: With a New Appendix ``Generalization of a Fundamental Matrix''},
  series    = {Undergraduate Texts in Mathematics},
  publisher = {Springer},
  address   = {New York},
  year      = {1976},
  isbn      = {978-0-387-90192-3}
}

@techreport{Page1999,
  author      = {Page, Lawrence and Brin, Sergey and Motwani, Rajeev and Winograd, Terry},
  title       = {The {P}age{R}ank Citation Ranking: Bringing Order to the Web},
  institution = {Stanford InfoLab},
  type        = {Technical Report},
  number      = {1999-66},
  year        = {1999},
  note        = {Stanford InfoLab publication 422}
}

@inproceedings{ArthurVassilvitskii2007,
  author    = {Arthur, David and Vassilvitskii, Sergei},
  title     = {$k$-Means++: The Advantages of Careful Seeding},
  booktitle = {Proceedings of the Eighteenth Annual ACM-SIAM Symposium on Discrete Algorithms},
  pages     = {1027--1035},
  publisher = {Society for Industrial and Applied Mathematics},
  address   = {Philadelphia, PA},
  year      = {2007}
}

@article{DhillonModha2001,
  author  = {Dhillon, Inderjit S. and Modha, Dharmendra S.},
  title   = {Concept Decompositions for Large Sparse Text Data Using Clustering},
  journal = {Machine Learning},
  volume  = {42},
  number  = {1--2},
  pages   = {143--175},
  year    = {2001},
  doi     = {10.1023/A:1007612920971}
}

@article{LeichtNewman2008,
  author  = {Leicht, E. A. and Newman, M. E. J.},
  title   = {Community Structure in Directed Networks},
  journal = {Physical Review Letters},
  volume  = {100},
  number  = {11},
  pages   = {118703},
  year    = {2008},
  doi     = {10.1103/PhysRevLett.100.118703}
}

@article{StrehlGhosh2002,
  author  = {Strehl, Alexander and Ghosh, Joydeep},
  title   = {Cluster Ensembles---A Knowledge Reuse Framework for Combining Multiple Partitions},
  journal = {Journal of Machine Learning Research},
  volume  = {3},
  pages   = {583--617},
  year    = {2002}
}

@article{HubertArabie1985,
  author  = {Hubert, Lawrence and Arabie, Phipps},
  title   = {Comparing Partitions},
  journal = {Journal of Classification},
  volume  = {2},
  pages   = {193--218},
  year    = {1985},
  doi     = {10.1007/BF01908075}
}

@misc{McDaidGreeneHurley2011,
  author        = {McDaid, Aaron F. and Greene, Derek and Hurley, Neil},
  title         = {Normalized Mutual Information to Evaluate Overlapping Community Finding Algorithms},
  year          = {2011},
  archivePrefix = {arXiv},
  eprint        = {1110.2515},
  primaryClass  = {physics.soc-ph}
}

@article{LeskovecKleinbergFaloutsos2007,
  author  = {Jure Leskovec and Jon Kleinberg and Christos Faloutsos},
  title   = {Graph Evolution: Densification and Shrinking Diameters},
  journal = {ACM Transactions on Knowledge Discovery from Data},
  volume  = {1},
  number  = {1},
  year    = {2007}
}

@article{WangLiangJi2020,
  author  = {Wang, Zhe and Liang, Yingbin and Ji, Pengsheng},
  title   = {Spectral Algorithms for Community Detection in Directed Networks},
  journal = {Journal of Machine Learning Research},
  volume  = {21},
  number  = {153},
  pages   = {1--45},
  year    = {2020}
}

@article{KimShi2012,
  author  = {Kim, Sungmin and Shi, Tao},
  title   = {Scalable Spectral Algorithms for Community Detection in Directed Networks},
  journal = {arXiv preprint arXiv:1211.6807},
  year    = {2012}
}

@article{DangDoPhan2023,
  author  = {Dang, Tien Dat and Do, Duy Hieu and Phan, Thi Ha Duong},
  title   = {Community Detection in Directed Graphs Using Stationary Distribution and Hitting Times Methods},
  journal = {Social Network Analysis and Mining},
  volume  = {13},
  number  = {80},
  year    = {2023},
  doi     = {10.1007/s13278-023-01080-1}
}

@article{PonsLatapy2006,
  author  = {Pons, Pascal and Latapy, Matthieu},
  title   = {Computing Communities in Large Networks Using Random Walks},
  journal = {Journal of Graph Algorithms and Applications},
  volume  = {10},
  number  = {2},
  pages   = {191--218},
  year    = {2006}
}

@article{RosvallBergstrom2008,
  author  = {Rosvall, Martin and Bergstrom, Carl T.},
  title   = {Maps of Random Walks on Complex Networks Reveal Community Structure},
  journal = {Proceedings of the National Academy of Sciences},
  volume  = {105},
  number  = {4},
  pages   = {1118--1123},
  year    = {2008}
}

@inproceedings{YangLeskovec2014CoDA,
  author    = {Yang, Jaewon and McAuley, Julian and Leskovec, Jure},
  title     = {Detecting Cohesive and 2-Mode Communities in Directed and Undirected Networks},
  booktitle = {Proceedings of the Seventh ACM International Conference on Web Search and Data Mining},
  pages     = {323--332},
  year      = {2014},
  doi       = {10.1145/2556195.2556243}
}

@misc{SNAPEmailEuCore,
  title        = {Email-Eu-core network dataset},
  author       = {{Stanford Network Analysis Project}},
  howpublished = {\url{https://snap.stanford.edu/data/email-Eu-core.html}},
  note         = {Accessed 2026}
}

@misc{SNAPCollegeMsg,
  title        = {CollegeMsg temporal network dataset},
  author       = {{Stanford Network Analysis Project}},
  howpublished = {\url{https://snap.stanford.edu/data/CollegeMsg.html}},
  note         = {Accessed 2026}
}

@misc{SNAPWikiVote,
  title        = {Wikipedia vote network dataset},
  author       = {{Stanford Network Analysis Project}},
  howpublished = {\url{https://snap.stanford.edu/data/wiki-Vote.html}},
  note         = {Accessed 2026}
}

@misc{SNAPGnutella04,
  title        = {Gnutella Peer-to-Peer Network, August 4 2002},
  author       = {{Stanford Network Analysis Project}},
  howpublished = {\url{https://snap.stanford.edu/data/p2p-Gnutella04.html}},
  note         = {Accessed May 2026}
}

@misc{SNAPGnutella08,
  title        = {Gnutella Peer-to-Peer Network, August 8 2002},
  author       = {{Stanford Network Analysis Project}},
  howpublished = {\url{https://snap.stanford.edu/data/p2p-Gnutella08.html}},
  note         = {Accessed May 2026}
}

@inproceedings{YinBensonLeskovecGleich2017,
  author    = {Yin, Hao and Benson, Austin R. and Leskovec, Jure and Gleich, David F.},
  title     = {Local Higher-Order Graph Clustering},
  booktitle = {Proceedings of the 23rd ACM SIGKDD International Conference on Knowledge Discovery and Data Mining},
  pages     = {555--564},
  publisher = {ACM},
  year      = {2017},
  doi       = {10.1145/3097983.3098069}
}

@article{PanzarasaOpsahlCarley2009,
  title={Patterns and Dynamics of Users' Behavior and Interaction: Network Analysis of an Online Community},
  author={Panzarasa, Pietro and Opsahl, Tore and Carley, Kathleen M.},
  journal={Journal of the American Society for Information Science and Technology},
  volume={60},
  number={5},
  pages={911--932},
  year={2009}
}

@inproceedings{KumarSpezzanoSubrahmanianFaloutsos2016,
  title={Edge Weight Prediction in Weighted Signed Networks},
  author={Kumar, Srijan and Spezzano, Francesca and Subrahmanian, V. S. and Faloutsos, Christos},
  booktitle={IEEE International Conference on Data Mining},
  pages={221--230},
  year={2016}
}

@article{WestPaskovLeskovecPotts2014,
  title={Exploiting Social Network Structure for Person-to-Person Sentiment Analysis},
  author={West, Robert and Paskov, Hristo S. and Leskovec, Jure and Potts, Christopher},
  journal={Transactions of the Association for Computational Linguistics},
  volume={2},
  pages={297--310},
  year={2014}
}

@article{RipeanuFosterIamnitchi2002,
  author  = {Ripeanu, Matei and Foster, Ian and Iamnitchi, Adriana},
  title   = {Mapping the {Gnutella} Network: Properties of Large-Scale Peer-to-Peer Systems and Implications for System Design},
  journal = {IEEE Internet Computing},
  volume  = {6},
  number  = {1},
  pages   = {50--57},
  year    = {2002}
}

@article{KarrerNewman2011,
  author  = {Karrer, Brian and Newman, M. E. J.},
  title   = {Stochastic blockmodels and community structure in networks},
  journal = {Physical Review E},
  volume  = {83},
  number  = {1},
  pages   = {016107},
  year    = {2011}
}

@inproceedings{QinRohe2013,
  author    = {Qin, Tai and Rohe, Karl},
  title     = {Regularized spectral clustering under the degree-corrected stochastic blockmodel},
  booktitle = {Advances in Neural Information Processing Systems},
  volume    = {26},
  year      = {2013}
}

@article{DoPhan2025OverlapCosine,
  author  = {Do, Duy Hieu and Phan, Thi Ha Duong},
  title   = {Overlapping Community Detection Algorithms Using Modularity and the Cosine},
  journal = {Advances in Complex Systems},
  volume  = {28},
  number  = {03},
  pages   = {2550006},
  year    = {2025},
  doi     = {10.1142/S0219525925500067}
}

@article{Hornik2012,
  author  = {Kurt Hornik and Ingo Feinerer and Martin Kober and Christian Buchta},
  title   = {Spherical k-Means Clustering},
  journal = {Journal of Statistical Software},
  volume  = {50},
  number  = {10},
  pages   = {1--22},
  year    = {2012}
}

@inproceedings{Perozzi2014,
  author    = {Bryan Perozzi and Rami Al-Rfou and Steven Skiena},
  title     = {DeepWalk: Online Learning of Social Representations},
  booktitle = {Proceedings of the 20th ACM SIGKDD International Conference on Knowledge Discovery and Data Mining},
  pages     = {701--710},
  year      = {2014}
}

@inproceedings{Tang2015,
  author    = {Jian Tang and Meng Qu and Mingzhe Wang and Ming Zhang and Jun Yan and Qiaozhu Mei},
  title     = {{LINE}: Large-scale Information Network Embedding},
  booktitle = {Proceedings of the 24th International Conference on World Wide Web},
  pages     = {1067--1077},
  year      = {2015}
}

@inproceedings{GroverLeskovec2016,
  author    = {Aditya Grover and Jure Leskovec},
  title     = {node2vec: Scalable Feature Learning for Networks},
  booktitle = {Proceedings of the 22nd ACM SIGKDD International Conference on Knowledge Discovery and Data Mining},
  pages     = {855--864},
  year      = {2016}
}

@article{LuZhou2011,
  author  = {Linyuan L{\"u} and Tao Zhou},
  title   = {Link prediction in complex networks: A survey},
  journal = {Physica A: Statistical Mechanics and its Applications},
  volume  = {390},
  number  = {6},
  pages   = {1150--1170},
  year    = {2011}
}

@article{Tandon2021,
  author  = {Tandon, Aditya and Albeshri, Aiiad and Thayananthan, Vijey and Alhalabi, Wadee and Radicchi, Filippo and Fortunato, Santo},
  title   = {Community Detection in Networks Using Graph Embeddings},
  journal = {Physical Review E},
  volume  = {103},
  number  = {2},
  pages   = {022316},
  year    = {2021},
  doi     = {10.1103/PhysRevE.103.022316}
}

@article{YuJiaoDehmerEmmertStreib2024,
  author  = {Yu, Guihai and Jiao, Yang and Dehmer, Matthias and Emmert-Streib, Frank},
  title   = {Community Detection in Directed Networks Based on Network Embeddings},
  journal = {Chaos, Solitons \& Fractals},
  volume  = {189},
  pages   = {115630},
  year    = {2024},
  doi     = {10.1016/j.chaos.2024.115630}
}

@article{DoPhan2025LouvainRW,
  author  = {Do, Duy Hieu and Phan, Thi Ha Duong},
  title   = {An Improvement on the {Louvain} Algorithm Using Random Walks},
  journal = {Journal of Combinatorial Optimization},
  volume  = {50},
  number  = {2},
  pages   = {14},
  year    = {2025},
  doi     = {10.1007/s10878-025-01337-9}
}

@inproceedings{DoNguyenPhan2025DFLouvainRW,
  author    = {Do, Duy Hieu and Nguyen, Dung and Phan, Thi Ha Duong},
  title     = {Improving the {DF-Louvain} Algorithm Through Random Walk-Based Refinement},
  booktitle = {Proceedings of the 2025 RIVF International Conference on Computing and Communication Technologies (RIVF)},
  pages     = {932--937},
  year      = {2025}
}

@inproceedings{DoPhan2022ExtendedWalktrap,
  author    = {Do, Duy Hieu and Phan, Thi Ha Duong},
  title     = {Detecting Communities in Large Networks Using the Extended {Walktrap} Algorithm},
  booktitle = {Proceedings of the 2022 RIVF International Conference on Computing and Communication Technologies (RIVF)},
  pages     = {100--105},
  year      = {2022},
  doi       = {10.1109/RIVF55975.2022.10013880}
}

\end{document}